\renewcommand{\det}{{\rm Det}\,}
\newcommand{\be}{\begin{equation}}
\newcommand{\ee}{\end{equation}}
\newcommand{\bea}{\begin{eqnarray}}
\newcommand{\eea}{\end{eqnarray}}
\renewcommand{\d}[2]{\frac{d #1}{d #2}} 
\DeclareMathOperator{\Tr}{Tr}
\def\Tr{\hbox{Tr}} 
\newtheorem{lemma}{Lemma}
\newtheorem{theorem}{Theorem}
\newcommand{\proofend}{\hfill\fbox\\\medskip }
\renewenvironment{proof}{{\noindent \bf Proof }}{$\proofend$} 
\declaretheoremstyle[notefont=\bfseries,notebraces={}{},%
    headpunct={.},postheadspace=0.5em,bodyfont=\it]{mystyle}
\declaretheorem[style=mystyle,numbered=no,name=Lemma]{thm-hand}
\begin{document}
\title{The reachable set of single-mode unstable quadratic Hamiltonians}
\author{Uther Shackerley-Bennett}
\affiliation{Department of Physics and Astronomy, University College London, Gower Street, London WC1E 6BT, UK}
\author{Alexander Pitchford}
	\affiliation{Institute of Mathematics, Physics and Computer Science, Aberystwyth University}
\author{Marco G.~Genoni}
\affiliation{Quantum Technology Lab, Dipartimento di Fisica, Universit\'a degli Studi di Milano, 20133 Milano, Italy}
\affiliation{Department of Physics and Astronomy, University College London, Gower Street, London WC1E 6BT, UK}
\author{Alessio Serafini}
\affiliation{Department of Physics and Astronomy, University College London, Gower Street, London WC1E 6BT, UK}
\author{Daniel K.~Burgarth}
	\affiliation{Institute of Mathematics, Physics and Computer Science, Aberystwyth University}
\date{\today}

\begin{abstract}
The question of open-loop control in the Gaussian regime may be cast by asking which Gaussian unitary transformations are reachable by turning on and off a given set of quadratic Hamiltonians. For compact groups, including finite dimensional unitary groups, the well known Lie algebra rank criterion provides a sufficient and necessary condition for the reachable set to cover the whole group. Because of the non-compact nature of the symplectic group, which corresponds to Gaussian unitary transformations, this criterion turns out to be still necessary but not sufficient for Gaussian systems. If the control Hamiltonians are unstable, in a sense made rigorous in the main text, the peculiar situation may arise where the rank criterion is satisfied and yet not all symplectic transformations are reachable. Here, we address this situation for one degree of freedom and study the properties of the reachable set under unstable control Hamiltonians. First, we provide a partial analytical characterisation of the reachable set and prove that no orthogonal (`energy-preserving' or `passive' in the literature) symplectic operations may be reached with such controls. Then, we apply numerical optimal control algorithms to demonstrate a complete characterisation of the set in specific cases.
\end{abstract}
\pacs{42.50.Dv, 03.65.-w, 02.30.Yy}
\keywords{}
\maketitle

\section{Unstable quadratic controls}
Determining the set of reachable operations 
given a set of enactable quadratic Hamiltonians is an interesting endeavour, 
in view of the wide range of practical settings where such controls may be implemented --
encompassing light fields, the motional degrees of freedom of trapped ions, 
opto- and nano-mechanical oscillators and superconducting Josephson junctions,
to mention but a few -- as well as the variety of tasks they allow for, 
such as entanglement generation, squeezing, cooling and quantum communication protocols
\cite{FerraroBook,Weedbrook2012}. 
The group of all possible unitary operations 
generated by quadratic Hamiltonians corresponds to the real symplectic group,
and it is reasonable, in this context, to refer to a system as controllable if the reachable set 
coincides with the whole symplectic group.

On the theoretical side, this problem presents the peculiar feature that, although the symplectic group is a finite dimensional matrix group, the standard controllability condition provided by the Lie algebra rank criterion, which is equivalent to controllability for finite dimensional unitary groups, is not
sufficient \cite{Sachkov}. The reason for the difference is the non-compact nature of the symplectic group which provides the trajectories with the possibility of not recurring, this being impossible on compact groups. This characteristic presents itself in the operation known in quantum optics as squeezing, which can proceed indefinitely without ever recurring back to the identity operation. In a seminal paper Jurdjevic and Sussmann \cite{Jurdjevic1972} prove a sufficient condition for controllability on non-compact groups by considering the existence of control Hamiltonians that recur. 
This aspect has been noted in the context of quantum optics \cite{Genoni2012}, and an additional sufficient condition for controllability, based on the possibility of accessing a positive control Hamiltonian, has been pointed out and discussed in regard to its connection with dynamical recurrence, elsewhere referred to as `neutrality' \cite{ElliottBook}.

Furthermore it can be shown that, for systems comprised of a single degree of freedom, 
such a condition is not only sufficient but also necessary for controllability \cite{Wu2007}. One is then confronted with the intriguing situation that the Lie-algebra rank criterion may hold 
and yet the reachable set may not be the whole symplectic group. This occurs where the set of controls 
does not contain any `neutral' element, which would go back to the identity operator after a certain recurrence time.
In the quantum optical language, such a condition corresponds to the fact that all accessible Hamiltonians 
have a squeezing component which is strong enough to prevent recurrence: since all such Hamiltonians 
are unbounded from below, we shall refer to them as `unstable' \cite{paranote}.

In this paper, we apply analytical and numerical techniques 
to investigate the properties of the non-trivial reachable set 
of single-mode continuous variable systems 
under unstable quadratic controls. 
We will first review, and recast in the symplectic setting with reference to quantum optical operations, 
the analytical result of Wu, Li, Zhang and Tarn in \cite{Wu2007} concerning the uncontrollability of unstable single-mode systems. 
Part of the formalism we will introduce in the process will be instrumental in establishing 
our main analytical finding: that unstable controls do not allow one to reach 
non-trivial symplectic orthogonal transformations (corresponding to passive, 
energy-preserving phase-shifters in the lab), which implies a sharp restriction on the reachable set. 
Further, we will demonstrate the application of 
optimal control techniques to the symplectic group, first explored in \cite{wurabitz}, 
and complete the characterisation of the reachable set through the resulting numerics.

\section{Preliminaries and setup}\label{sec:setup}

Let $\hat{\bf r}=(\hat{x}_{1},\hat{p}_1,\ldots,\hat{x}_n,\hat{p}_n)^{\sf T}$ be a vector of canonical operators such that $[\hat{x}_j,\hat{p}_k]=i\delta_{jk}$. 
One refers to a Hamiltonian $\hat{H}$ as quadratic if it can be written as $\hat{H} = \frac12 \hat{\bf r}^{\sf T}H \hat{\bf r}$,
where $H$ is a real, symmetric $2n\times 2n$ matrix.
As mentioned above, unitary transformations generated by quadratic Hamiltonians correspond to transformations 
belonging to the symplectic group $\operatorname{Sp}_{2n,\mathbbm R}$ 
(this can readily be seen by considering the Heisenberg evolution of the vector of operators $\hat{\bf r}$). 
In particular, one has that the Hamiltonian evolution after a time $t$ is described by the symplectic transformation $S={\rm \,e}^{\Omega H t}$,
where
\begin{equation}
 \Omega := \bigoplus_{i=1}^n \begin{pmatrix} 0 & 1 \\ -1 & 0 \end{pmatrix}.
\end{equation}
The algebra of the symplectic group $\mathfrak{sp}_{2n,\mathbbm R}$ is therefore characterised 
as the set of matrices that can be written as $\Omega H$, where $H$ is real and symmetric.

We shall restrict ourselves here to the single-mode case ($n=1$) and consider 
the open-loop control problem defined by the evolution equation:
\begin{equation}\label{eq:system}
 \dot{S}(t) = \left( A + u(t) B \right) S(t), \quad S(0) = \mathbb{I}_2,
\end{equation}
where $u(t)$, called the control function, is any locally bounded measurable function defined on the positive time domain $[0,\infty)$ and $A, B\in \mathfrak{sp}_{2,\mathbbm R}$. 
The generator $A$ is often referred to as the drift term, in that it represents the `always-on' Hamiltonian dynamics,
while $B$ is the control term. Given $A$ and $B$, the subset $\Xi$ of $\mathfrak{sp}_{2,\mathbbm R}$ with elements of the form 
$A + vB$, $v \in \mathbbm R$, is called the set of accessible dynamical generators of the system. It is sufficient to consider a single control term even though in general we could consider, at most, three linearly independent ones. If that were the case, however, 
the drift term would be subsumed in the controls, making the problem trivial. 
Using two independent control generators, instead, would either subsume the drift term or imply that $\Xi$ contains an, as yet undefined, elliptic element. In all such cases the Lie algebra rank criterion would again become sufficient for controllability. Here, we would like to characterise systems for which this condition is satisfied but \textit{not} sufficient.

Eq.~(\ref{eq:system}) should be thought of as the equation relating a set of time-varying controls and some reachable element of the symplectic group $S(t)$.
In order to characterise which symplectic operations will be achievable for a certain accessible set $\Xi$, it is hence 
expedient to define the reachable set as follows (as customary in control theory):\smallskip

\noindent {\bf Definition:}  {\em Reachable set}.
The set of elements $S(t)$ that is reachable under Eq.~(\ref{eq:system}) for some choice of control function $u(t)$ is called the reachable set and is denoted $\mathcal{R}$.\smallskip

Following \cite{Wu2007}, it is convenient to categorise the elements $M$ of the Lie algebra $\mathfrak{sp}_{2,\mathbbm R}$ as
\begin{itemize}
\item Parabolic, if $\Tr[M^2] = 0\; ,$
\item Hyperbolic, if $\Tr[M^2] > 0 \; ,$
\item Elliptic, if $\Tr[M^2] < 0 \; .$
\end{itemize}
In the physical picture, a trajectory set by a single-mode elliptic generator is a stable one, corresponding to a strictly positive or strictly negative Hamiltonian matrix $H=\Omega M$. This fact can quickly be seen by noting that $\Tr[M^2] \propto -\det [M] \propto -\det [H]$. Such a trajectory will always recur, in the sense that 
the operator $S=\exp [Mt]$ will get 
arbitrarily close to the identity, in any matrix topology, at some positive time $t$. Note that the `neutral' elements mentioned before correspond to the elliptic elements for a single mode. Furthermore, notice that 
elliptic elements may involve a certain `amount' of squeezing, in a sense that will be made clear.
Hyperbolic generators, instead, give rise to unstable dynamics, that will never recur. Their Hamiltonian 
matrix is neither negative nor positive and does not admit a decomposition 
into normal modes, as per Williamson's theorem \cite{Williamson1936}.
Parabolic generators are instead pathological in some sense, typically they cannot be symplectically diagonalised, as
they are Jordan blocks of order $2$ (for $n=1$). A common, well known example of such generators
is the one associated with the free Hamiltonian $\hat{p}^2$. These are also non-recurring but, to the purpose of our analysis, 
it is sufficient to only consider systems where $\Xi$ solely contains hyperbolic elements, because single-mode systems involving parabolic controls either do not satisfy the Lie algebra rank criterion or are controllable.

Let us also specify a basis of $\mathfrak{sp}_{2,\mathbbm R}$: 
\begin{equation}\label{eq:basis}
 K_x = \frac{1}{2}\begin{pmatrix} 0 & 1 \\ 1 & 0 \end{pmatrix}, \quad K_y = \frac{1}{2} \begin{pmatrix} -1 & 0 \\ 0 & 1 \end{pmatrix}, \quad K_z = \frac{1}{2} \begin{pmatrix} 0 & -1 \\ 1 & 0 \end{pmatrix},
\end{equation}
which satisfies the algebra
\begin{equation} \label{eq:crs}
 [K_x,K_y] = -K_z, \quad [K_y,K_z] = K_x, \quad [K_z, K_x] = K_y.
\end{equation}

The generator $K_z$ is clearly elliptic:
the group elements ${\rm e}^{K_z t}$ are all the $2$-dimensional rotations, forming the Abelian group $\operatorname{SO}(2)$, 
which is the maximal compact subgroup of $\operatorname{Sp}_{2,\mathbbm R}$. In the lab, these correspond to 
phase-plates, or phase-shifters, that rotate the optical phase of a field: these are all the passive (or energy-preserving 
in that they preserve the number of excitations) operations belonging to $\operatorname{Sp}_{2,\mathbbm R}$. 
The hyperbolic generators $K_x$ and $K_y$ instead generate single-mode squeezing operations. 
If a linear combination of generators is considered, such as $a K_y + K_z$, with $a\in{\mathbbm R}$, it is easy to show 
that it is elliptic for $|a|<1$, parabolic for $|a|=1$ and hyperbolic for $|a|>1$. It is in this sense 
that we claimed above that elliptic generators may contain a certain `amount' of squeezing, and that for higher amounts 
the generator becomes hyperbolic.

\section{Stability and controllability}

A system evolving under Eq.~(\ref{eq:system}) is said to be controllable if the reachable set 
is the whole group $\operatorname{Sp}_{2,\mathbbm R}$.
It is known that such a system is controllable if the elements of $\Xi$ generate the Lie algebra 
({\em i.e.}, if all the algebra $\mathfrak{sp}_{2,\mathbbm R}$ 
may be obtained by their linear combinations and repeated commutations) {\em and} if 
at least one element in $\Xi$ is elliptic \cite{ElliottBook,Genoni2012}. The former condition is known as the Lie algebra rank criterion \cite{Dalessandro2007}. We shall refer to such a system as a `stable' system. 
For stable systems, the reachable set is simply the whole group. Note that all of the statements above generalise to $n$ modes.

In the case of $\operatorname{Sp}_{2,\mathbbm R}$ it is also known that a system is not controllable 
if no element in the set of accessible generators $\Xi$ is elliptic \cite{Wu2007}. We shall refer to such systems as 
`unstable'. The characterisation of the reachable set of unstable systems is a non-trivial open question, which we shall 
address here.

Before proceeding, it is useful to review the argument that establishes the 
uncontrollability of single-mode unstable systems. 
Here we will follow the treatment of \cite{Wu2007} faithfully. 
In order to make the paper self-contained, we shall reproduce the proofs of these statements 
as well in Appendix \ref{sec:proofs}.

\begin{lemma}\label{thm:symplecticsimilarity}
 If $\,\Xi$ only contains hyperbolic elements then Eq.~(\ref{eq:system}) is similar, via a symplectic transformation, to 
\begin{equation}\label{eq:esystem}
 \dot{S}(t) = (-K_x + bK_z + u(t)K_y)S(t), \quad S(0) = \mathbb{I}_2,
\end{equation}
where $b$ is some real constant with modulus strictly less than one.
\end{lemma}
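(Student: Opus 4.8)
\medskip\noindent\textbf{Proof sketch.} The plan is to exhibit an explicit symplectic change of coordinates, supplemented only by the harmless freedoms of rescaling time by a positive factor and of reparametrising the (arbitrary) control function affinely. These operations leave the reachable set $\mathcal{R}$ unchanged up to an obvious relabelling, because $\mathcal{R}$ depends only on the affine line $\Xi=\{A+vB:v\in\mathbbm R\}$, and they merely dilate that line and slide its base point along $B$. Conjugating \eq{eq:system} by $P\in\operatorname{Sp}_{2,\mathbbm R}$ sends $(A,B)\mapsto(P^{-1}AP,\,P^{-1}BP)$; hence it suffices to transport $\Xi$ onto $\{-K_x+bK_z+wK_y:w\in\mathbbm R\}$ for some $|b|<1$. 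I would carry this out in two steps, following the route of \cite{Wu2007}: first rotate the control generator onto a multiple of $K_y$, then normalise the drift.

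\emph{Step 1.} I first claim $B$ itself is hyperbolic. Writing $\Tr[(A+vB)^2]=\Tr[A^2]+2v\,\Tr[AB]+v^2\,\Tr[B^2]$, the hypothesis that all of $\Xi$ is hyperbolic forces this quadratic in $v$ to be strictly positive for every real $v$, which rules out $\Tr[B^2]<0$; and $\Tr[B^2]=0$ (parabolic $B$) is excluded because then the system either fails the Lie-algebra rank criterion or is controllable, as discussed in Sec.~\ref{sec:setup}. So $\Tr[B^2]>0$. Since $\operatorname{Sp}_{2,\mathbbm R}=\operatorname{SL}_2(\mathbbm R)$ and a traceless real $2\times2$ matrix with $\Tr[B^2]>0$ has two distinct real eigenvalues $\pm\mu$, $\mu^2=\tfrac12\Tr[B^2]$, the matrix $B$ is diagonalisable over $\mathbbm R$; after rescaling and, if necessary, composing with a coordinate swap, the diagonalising matrix can be taken in $\operatorname{SL}_2(\mathbbm R)$. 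Conjugating by it sends $B$ to $\pm2\mu K_y$, and absorbing this non-zero scalar into the control function leaves $B$ equal to a non-zero multiple of $K_y$.

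\emph{Step 2.} After Step 1 the drift has the form $A=aK_x+cK_y+dK_z$, and as $v$ varies the accessible generators are exactly $\{aK_x+wK_y+dK_z:w\in\mathbbm R\}$, the $K_y$-coefficient $w$ sweeping all of $\mathbbm R$. From $\Tr[(aK_x+wK_y+dK_z)^2]=\tfrac12(a^2+w^2-d^2)$, the requirement that every such generator be hyperbolic — in particular at $w=0$ — forces $a^2-d^2>0$, whence $a\neq0$ and $|d/a|<1$. Rescaling time by $1/|a|>0$ brings the $K_x$-coefficient to $\operatorname{sgn}(a)$; if that sign is positive, a further conjugation by the rotation $\exp(\pi K_z)\in\operatorname{Sp}_{2,\mathbbm R}$, which reverses the signs of $K_x$ and $K_y$ (the latter being reabsorbed into the control) while fixing $K_z$, turns it into $-1$. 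The accessible set is then $\{-K_x+bK_z+wK_y:w\in\mathbbm R\}$ with $b=d/|a|$ up to a sign, so that $|b|=|d|/|a|<1$; this coincides with the accessible set of \eq{eq:esystem}, proving the asserted similarity.

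The eigenvalue estimates and the sign bookkeeping are routine; the genuinely structural point — and the one I expect to be the crux — is the claim in Step 1 that a hyperbolic generator can be diagonalised \emph{by a symplectic} transformation, i.e.\ the identification $\operatorname{Sp}_{2,\mathbbm R}=\operatorname{SL}_2(\mathbbm R)$ together with transitivity of the adjoint action on each one-sheeted hyperboloid $\{\Tr[M^2]=\mathrm{const}>0\}$. The delicate technicality attached to it is ensuring the conjugating matrix has unit determinant and flipping the sign of the $K_x$-component without disturbing $B\propto K_y$, which the rotation $\exp(\pi K_z)$ takes care of. One could in fact go further and boost $d$ to zero, obtaining $b=0$; retaining a free $|b|<1$ is all that the statement requires.
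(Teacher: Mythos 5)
Your proposal is correct and follows the same overall strategy as the paper's proof: establish $\Tr[B^2]>0$ from positivity of the quadratic $v\mapsto\Tr[(A+vB)^2]$, conjugate $B$ onto a multiple of $K_y$, absorb the $K_y$-component of the drift into the control, use hyperbolicity of the zero-$K_y$ element of $\Xi$ to get $|d|<|a|$, then rescale time and flip the sign of the $K_x$-coefficient by conjugating with a rotation (your $\exp(\pi K_z)$ is precisely the paper's $\Omega$, and your bookkeeping --- that it fixes $K_z$ and negates $K_x,K_y$ --- is actually more accurate than the paper's phrasing). The one step you carry out differently is the symplectic diagonalisation of $B$: the paper constructs the conjugating matrix explicitly as $e^{\beta K_x}e^{\alpha K_z}$ (a rotation to kill the $K_x$-component followed by a squeeze to kill the $K_z$-component, Lemma~\ref{thm:hypbconversion}), whereas you invoke the low-dimensional coincidence $\operatorname{Sp}_{2,\mathbbm R}=\operatorname{SL}_2(\mathbbm R)$ and the real diagonalisability of a traceless matrix with distinct real eigenvalues. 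Both are valid; yours is shorter, while the paper's explicit construction exhibits the orbit structure of the adjoint action in a form that does not rely on the $n=1$ identification. Two minor remarks: (i) your exclusion of parabolic $B$ leans on the assertion from Sec.~\ref{sec:setup} rather than on the self-contained chain the paper uses ($\Tr[B^2]=0$ and positivity of the quadratic force $\Tr[AB]=0$, whence $\Tr[[A,B]^2]=0$ by Lemma~\ref{thm:parablemma} and the rank criterion fails by Lemma~\ref{thm:parabthing}); since the lemma is used only under the rank criterion this is acceptable, but the explicit chain is cleaner. (ii) Like the paper, you use a positive time rescaling in addition to symplectic conjugation; this is harmless for the reachable set but is worth flagging as going slightly beyond literal similarity.
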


Let us now denote with $\widetilde{\Xi}$ the subset of $\mathfrak{sp}_{2,\mathbbm R}$ with elements of the form 
$-K_x + bK_z + vK_y$, $v \in \mathbbm R$,
which define the accessible dynamical generators of Eq.~(\ref{eq:esystem}).
Further, we shall denote with $\widetilde{\mathcal{R}}$ 
the set of elements reachable under Eq.~(\ref{eq:esystem}) for any choice of control functions $u(t)$.

\begin{lemma}\label{lemmaf}
Any real $2 \times 2$ matrix can be written as
\begin{equation}
X = \begin{pmatrix} x_1+x_3 & x_2+x_4 \\ x_4-x_2 & x_1-x_3 \end{pmatrix},
\end{equation}
where $x_i \in {\mathbbm R}$. If $X \in \widetilde{\mathcal{R}}$ then the function
\begin{equation}
  f(x_1,x_2,x_3,x_4) := (x_1-x_4)^2 - (x_2-x_3)^2
\end{equation}
satisfies 
\begin{equation}
 f(x_1,x_2,x_3,x_4) \geq 1
\end{equation}
and
\begin{equation}
\d{}{t}f(x_1,x_2,x_3,x_4) \geq 0,
\end{equation}
for any choice of $u(t)$ in Eq.~(\ref{eq:esystem}). 
\end{lemma}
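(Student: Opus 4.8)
The plan is to avoid the matrix entries altogether and track only the combination entering $f$. Since $\dot S = G(t)S$ with $G(t) = -K_x + bK_z + u(t)K_y$, every column of $S(t)$ satisfies the \emph{same} linear equation $\dot v = G(t)v$; in particular the columns evolve without reference to one another. The natural move is to pass to the eigenbasis of $K_x$, namely $e_\pm := (1,\pm 1)^{\sf T}$ (so that $K_x e_\pm = \pm\tfrac12 e_\pm$, while $K_y$ and $K_z$ also act very simply on $e_\pm$), and to follow the single vector $v(t) := S(t)\,e_-$. Writing $v(t) = a(t)\,e_- - c(t)\,e_+$, a short computation using the decomposition in the statement shows $a = x_1 - x_4$ and $c = x_2 - x_3$, hence $f = a^2 - c^2$; and $S(0) = \mathbb{I}_2$ gives $v(0) = e_-$, that is $(a,c) = (1,0)$ and $f(0) = 1$.

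Next I would derive the closed $2\times 2$ system obeyed by $(a,c)$. Substituting $v = a e_- - c e_+$ into $\dot v = G(t)v$ and reading off the $e_-$ and $e_+$ components yields $\dot a = \tfrac12\big(a + (b+u)c\big)$ and $\dot c = \tfrac12\big((u-b)a - c\big)$. Because $f$ depends only on this pair, I then compute $\dot f = 2a\dot a - 2c\dot c$; the control-dependent terms cancel identically, leaving the $u$-independent expression $\dot f = a^2 + c^2 + 2b\,ac$.

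To finish, I would use $|b| < 1$: completing the square gives $\dot f = (a + bc)^2 + (1-b^2)c^2 \ge 0$ (equivalently $a^2 + c^2 + 2b\,ac \ge (1-|b|)(a^2+c^2)\ge 0$ via $2|ac|\le a^2+c^2$). Thus along every admissible trajectory the map $t \mapsto f(x_1(t),\dots,x_4(t))$ is non-decreasing, which is the second assertion; integrating from $f(0)=1$ then gives $f \ge 1$ for every $X \in \widetilde{\mathcal R}$, which is the first.

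I do not expect a genuine obstacle here: the content of the lemma lies entirely in the choice of coordinates, \ie in recognising that $(x_1 - x_4,\, x_2 - x_3)$ is precisely the column $S(t)e_-$ expressed in the $K_x$-eigenbasis and therefore evolves autonomously, together with the fact that $|b|<1$ renders the quadratic form $a^2 + 2b\,ac + c^2$ positive (semi)definite so that the sign of $\dot f$ is forced. The only place to be careful is the sign bookkeeping when translating between the entry parametrisation of the statement and the $e_\pm$ basis; I would cross-check it against the $t=0$ data and against $K_x e_- = -\tfrac12 e_-$.
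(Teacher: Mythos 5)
Your proof is correct and is in substance the same as the paper's: both reduce to the pair $(x_1-x_4,\,x_2-x_3)$, observe that the control-dependent terms cancel in $\dot f = (x_1-x_4)^2+(x_2-x_3)^2+2b(x_1-x_4)(x_2-x_3)$, conclude $\dot f\ge 0$ from $|b|<1$ by completing the square, and integrate from $f(0)=1$. The only difference is presentational -- you obtain the closed $2\times2$ system for this pair by recognising it as the $K_x$-eigenbasis components of the column $S(t)e_-$, whereas the paper derives the four entrywise ODEs and combines them by hand; your route is a slightly cleaner way to see why $f$ is the right monotone quantity.
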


Since some symplectic transformations feature $f<1$ Lemma \ref{lemmaf} implies that such 
transformations are not reachable in systems obeying Eq.~(\ref{eq:esystem}). 
By Lemma \ref{thm:symplecticsimilarity}, this extends to all unstable systems, 
\textit{i.e.} those with only hyperbolic controls, since a symplectic similarity transformation cannot turn a non-trivial subset of 
$\operatorname{Sp}_{2,{\mathbbm R}}$ into the whole group. 
The impossibility of enacting the whole group of transformation is referred to as uncontrollability. In order to give a physical interpretation for this we first introduce singular value decomposition and then show that passive operations are unreachable.

\section{Singular value decomposition of symplectic operations} \label{sec:svd}

In order to set our findings against the backdrop of quantum optics it is 
very advantageous to introduce the singular value decomposition for 
symplectic transformations and take some care in defining its elements uniquely.
The singular value decomposition of symplectic matrices 
is often referred to as the Euler \cite{pramana} or Bloch-Messiah \cite{braunstein05} decomposition 
in the literature, 
and takes a very specific form, which is easily related to physical implementations: each symplectic 
on $n$ modes can be decomposed into the product of two passive operations, belonging to the 
intersection between $\operatorname{Sp}_{2n,{\mathbbm R}}$ and $\operatorname{SO}(2n)$,
and a direct sum of diagonal squeezing operations. In quantum optical implementations, 
passive operations correspond to beam splitters and phase-plates, which do not alter the energy of the free field.

Here, we define the singular value decomposition indicating the necessary bounds for uniqueness in one mode. This uniqueness is paramount in visualising the reachable set.\smallskip

\label{def:eulerdecomp}
\noindent {\bf Definition:}  {\em Singular value decomposition}.
Define
\begin{equation}
 \operatorname{SO}(2) :=  \left\{ \begin{pmatrix} \cos[\theta] & -\sin[\theta] \\ \sin[\theta] & \cos[\theta] \end{pmatrix} \; \middle|\; \theta \in \mathbbm R \right\} 
\end{equation}
and
\begin{equation} 
\mathcal{Z}(2,\mathbbm R) := \left\{\operatorname{diag}(1/z, z) \; \middle| \; z \in \mathbbm R, z \geq 1\right\}.
\end{equation}
Any $S \in \operatorname{Sp}_{2,\mathbbm R}$ can be decomposed as either
\begin{equation}\label{eq:svddecomposition}
 S = R_\theta Z R_\phi \quad \text{or} \quad S = R_\theta,
\end{equation}
where $R_\theta,R_\phi \in \operatorname{SO}(2)$ and $Z \in \mathcal{Z}(2,\mathbbm R)$. For the singular value decomposition to be unique, the allowed angles must be bounded such that
\begin{equation}
 -\pi+\theta_0 \leq \theta < \pi+\theta_0, \quad -\frac{\pi}{2}+\phi_0 \leq \phi < \frac{\pi}{2}+\phi_0,
\end{equation}
whre $\theta_0$ and $\phi_0$ are arbitrary but fixed. See Appendix \ref{sec:uniquesvd} for a justification of these bounds.

A generic element $S$ of $\operatorname{Sp}_{2,{\mathbbm R}}$ evolving under Eq.~(\ref{eq:system}) or Eq.~(\ref{eq:esystem})
can be re-parametrized in terms of the singular value decomposition parameters $\theta$, $\phi$ and $z$ such that 
$S=R_\theta Z R_\phi$. This re-parametrization can also be enacted for the function $f$ of Lemma \ref{lemmaf}, whence one obtains  
that the function
\begin{equation}
  f_z(\theta, \phi) =  \cos [2\theta] \cos [2\phi]-g(z, \phi)\sin [2\theta] \, ,
  \label{eq:reachregion}
\end{equation}
where 
 \begin{equation}
 g(z, \phi) := \frac{1}{2}\left(z^2+\frac{1}{z^2}\right) \sin [2\phi] - \frac{1}{2}\left(z^2-\frac{1}{z^2}\right),
\end{equation}
satisfies 
\begin{equation}\label{eq:fzbound}
  f_z( \theta,  \phi) \geq 1
\end{equation}
and
\begin{equation}\label{eq:fztbound}
\d{}{t}f_z( \theta,  \phi) \geq 0,
\end{equation}
for Eq.~(\ref{eq:esystem}).
The function $f_z$ is obtained as a change of coordinates from the function $f$ of Lemma \ref{lemmaf}, 
as shown explicitly in Appendix \ref{sec:calculations}.

\section{Unstable systems cannot reach passive operations}

We are now in a position to prove that single-mode unstable systems, in which no elliptic dynamical generators are enactable, do not allow passive operations to be reached. 

\begin{lemma}\label{thm:zalpha}
$f_z > f_1$ has solutions if and only if $z > 1$ and $\sin [2\theta] > 0$ for a given element of $\operatorname{Sp}_{2,\mathbbm R}$.
\end{lemma}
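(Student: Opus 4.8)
The plan is to reduce the claim to an elementary, $\phi$-independent sign analysis of the difference $f_z-f_1$. First I would specialise $g$ to $z=1$: since $\tfrac12(1+1)=1$ and $\tfrac12(1-1)=0$, one has $g(1,\phi)=\sin(2\phi)$, so that
\[
 f_1(\theta,\phi)=\cos(2\theta)\cos(2\phi)-\sin(2\theta)\sin(2\phi)=\cos(2\theta+2\phi).
\]
Holding the (uniquely defined, see Appendix~\ref{sec:uniquesvd}) singular-value-decomposition angles $\theta,\phi$ of the given element fixed while varying only $z$, the $\cos(2\theta)\cos(2\phi)$ term cancels in the difference and one is left with
\[
 f_z(\theta,\phi)-f_1(\theta,\phi)=\bigl(g(1,\phi)-g(z,\phi)\bigr)\sin(2\theta).
\]

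Next I would determine the sign of the coefficient $g(1,\phi)-g(z,\phi)$. Substituting the definition of $g$ and using the identities $z^2-2+z^{-2}=(z-z^{-1})^2$ and $z^2-z^{-2}=(z-z^{-1})(z+z^{-1})$, this coefficient factors as
\[
 g(1,\phi)-g(z,\phi)=\tfrac12(z-z^{-1})\bigl[(z+z^{-1})-(z-z^{-1})\sin(2\phi)\bigr].
\]
The key point is that, for $z\ge 1$, the bracket is bounded below by $(z+z^{-1})-(z-z^{-1})=2/z>0$ for every $\phi$ (using $\sin(2\phi)\le 1$ together with $z-z^{-1}\ge 0$), hence strictly positive, while the prefactor $z-z^{-1}$ is non-negative and vanishes exactly at $z=1$. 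Therefore $g(1,\phi)-g(z,\phi)>0$ if $z>1$ and $=0$ if $z=1$, independently of $\phi$.

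Combining the two steps, $f_z-f_1$ equals $\sin(2\theta)$ times a quantity that is strictly positive when $z>1$ and identically zero when $z=1$; hence $f_z>f_1$ admits a solution, i.e. holds for some element of $\operatorname{Sp}_{2,\mathbbm R}$, if and only if $z>1$ and $\sin(2\theta)>0$, which is the assertion. I do not expect a genuine obstacle here: the only steps needing a little care are spotting the two factorisations that render $g(1,\phi)-g(z,\phi)$ manifestly signed, and the one-line estimate showing that the bracket $(z+z^{-1})-(z-z^{-1})\sin(2\phi)$ never changes sign as $\phi$ ranges over $\mathbbm R$. It is worth noting that this lemma is purely a statement about the function $f_z$ on parameter space and does not yet invoke the dynamical bounds \eq{eq:fzbound} and \eq{eq:fztbound}; those enter only afterwards, when the lemma is used to conclude that passive operations (those with $z=1$, i.e. pure rotations $R_\theta$) cannot be reached.
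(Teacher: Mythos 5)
Your proof is correct and follows essentially the same route as the paper's: both reduce the claim to the sign of $\delta = g(z,\phi)-g(1,\phi)$ multiplying $\sin(2\theta)$ in the difference $f_z-f_1$. Your factorisation $g(1,\phi)-g(z,\phi)=\tfrac12(z-z^{-1})\bigl[(z+z^{-1})-(z-z^{-1})\sin(2\phi)\bigr]$ is a slightly cleaner packaging, since it shows the sign is the same for \emph{every} $\phi$ rather than arguing about existence of a solution in $\phi$ as the paper does, but the underlying idea is identical.
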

\begin{proof}
Consider the difference between $g(z, \phi)$ and $\sin [2\phi]$:
\begin{equation}
 \delta := g(z, \phi) - \sin [2\phi],
\end{equation}
so that
\begin{equation}
 f_z \equiv \cos[2(\theta- \phi)] - \delta\sin [2\theta].
\end{equation}
If $\sin [2\theta] < 0$ then $f_z > f_1$ if and only if $\delta > 0$. This is true if and only if $g(z,\phi) > \sin [2\phi]$. Hence
\begin{equation}
\frac{1}{2}\left(z^2+\frac{1}{z^2}\right) \sin [2\phi] - \frac{1}{2}\left(z^2-\frac{1}{z^2}\right) > \sin [2\phi],
\end{equation}
equally
\begin{equation}
 (z^2+\frac{1}{z^2}-2)\sin [2\phi] > z^2 - \frac{1}{z^2}. 
\end{equation}
$z^2+\frac{1}{z^2} - 2$ is positive for all values of $z$ and $\sin [2\phi] \leq 1$. Therefore this has solutions if and only if
\begin{equation}
 z^2+\frac{1}{z^2}-2 > z^2 - \frac{1}{z^2},
\end{equation}
which only has solutions for $z < 1$ which we have rule out in the definition of the unique singular value decomposition.

Conversely if $\sin [2\theta] > 0$ then $f_z > f_1$ if and only if $\delta < 0$. This is true if and only if $g(z,\phi) < \sin [2\phi]$. Hence
\begin{equation}
\frac{1}{2}\left(z^2+\frac{1}{z^2}\right) \sin [2\phi] - \frac{1}{2}\left(z^2-\frac{1}{z^2}\right) < \sin [2\phi],
\end{equation}
equally
\begin{equation}
 (z^2+\frac{1}{z^2}-2)\sin [2\phi] < z^2 - \frac{1}{z^2}. 
\end{equation}
$z^2+\frac{1}{z^2} - 2$ is positive for all values of $z$ and $\sin [2\phi] \geq -1$. Therefore this has solutions if and only if
\begin{equation}
 -(z^2+\frac{1}{z^2}-2) < z^2 - \frac{1}{z^2},
\end{equation}
which only has solutions for $z > 1$.
\end{proof}

\begin{lemma}\label{thm:zbound}
The existence of solutions for $f_z( \theta,\phi) > d$, where $d$ is some real number greater than or equal to one, implies that 
\begin{equation}
z > \sqrt{\frac{d+1}{2}}.
\end{equation}
\end{lemma}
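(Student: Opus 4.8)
The plan is to reduce the statement to the (sharp) bound $f_z(\theta,\phi) \le z^2$, holding for every admissible $\theta,\phi$ at fixed $z \ge 1$. Once this is in hand, the existence of a point where $f_z(\theta,\phi) > d$ forces $z^2 > d$; and since $d \ge 1$ implies $d \ge \tfrac{d+1}{2}$, we obtain $z^2 > \tfrac{d+1}{2}$, i.e. $z > \sqrt{(d+1)/2}$ — in fact even $z > \sqrt{d}$.

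To establish $\sup_{\theta,\phi} f_z(\theta,\phi) = z^2$ I would maximise in two stages. First fix $\phi$: since $f_z(\theta,\phi) = \cos[2\phi]\cos[2\theta] - g(z,\phi)\sin[2\theta]$ is a single sinusoid in $2\theta$, and the uniqueness range of $\theta$ spans a full period of $2\theta$, we get $\sup_\theta f_z(\theta,\phi) = \sqrt{\cos^2[2\phi] + g(z,\phi)^2}$. Next maximise the radicand over $\phi$. Putting $s := \sin[2\phi] \in [-1,1]$, so that $\cos^2[2\phi] = 1 - s^2$ and $g(z,\phi) = a s - b$ with $a := \tfrac12(z^2 + z^{-2})$, $b := \tfrac12(z^2 - z^{-2})$, the radicand becomes the quadratic $(1 - s^2) + (a s - b)^2$ in $s$, with leading coefficient $a^2 - 1 = b^2 \ge 0$. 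Its vertex lies at $s_\star = a/b > 1$ (this is where $z \ge 1$ enters), so on $[-1,1]$ the quadratic decreases and is maximal at $s = -1$, with value $(a+b)^2 = z^4$; hence $\sup_{\theta,\phi} f_z = z^2$. The degenerate case $z = 1$ (where $b = 0$) is immediate, as then $f_1(\theta,\phi) = \cos[2(\theta+\phi)] \le 1 = z^2$. That the value $z^2$ is genuinely attained — e.g. at $\cos[2\phi] = 0$ (so $g(z,\phi) = -z^2$) and $\sin[2\theta] = 1$ — shows the sharper exponent $\sqrt{d}$ would also be admissible.

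I expect the only real obstacle to be this $\phi$-maximisation: one must verify that the relevant quadratic in $s$ takes its maximum on $[-1,1]$ at the endpoint $s = -1$ rather than in the interior, and that is exactly the role of the hypothesis $z \ge 1$ (equivalently $b \ge 0$, $a > b$), with the boundary case $z = 1$ handled directly. It is worth stressing that a crude estimate will not do: bounding $f_z = \cos[2\theta]\cos[2\phi] - g(z,\phi)\sin[2\theta] \le 1 + |g(z,\phi)| \le 1 + z^2$ yields only $z > \sqrt{d-1}$, which is weaker than the claim for $d < 3$ (and vacuous at $d = 1$); the improvement to $z^2$ comes precisely from the fact that $\cos[2\phi]$ vanishes where $|g(z,\phi)|$ is largest. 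A more economical variant of the whole argument starts from the recasting $f_z = \cos[2(\theta+\phi)] - \delta\sin[2\theta]$ with $\delta := g(z,\phi) - \sin[2\phi]$ (as in the proof of Lemma~\ref{thm:zalpha}): here $|\delta| \le z^2 - 1$ by the triangle inequality, so $f_z \le 1 + |\delta| \le z^2$ at once.
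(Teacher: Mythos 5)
Your proposal is correct, and it takes a genuinely different route from the paper. The paper argues by necessity on $\delta := g(z,\phi)-\sin[2\phi]$: invoking Lemma~\ref{thm:zalpha} to restrict to $\sin[2\theta]>0$, it notes that $f_z>d$ forces $\delta<-(d-1)$, and then determines for which $z$ this inequality admits a solution with $\sin[2\phi]\in[-1,1]$, which yields the stated bound. You instead compute the exact supremum $\sup_{\theta,\phi}f_z=z^2$ (amplitude of the sinusoid in $2\theta$, then an explicit upward parabola in $s=\sin[2\phi]$ whose vertex $a/b>1$ lies outside $[-1,1]$, with $z=1$ handled separately), from which the existence of a solution of $f_z>d$ immediately gives the sharper conclusion $z^2>d$; the lemma's $z^2>(d+1)/2$ then follows from $d\ge(d+1)/2$ for $d\ge1$. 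Your route buys sharpness, independence from Lemma~\ref{thm:zalpha}, and an attainability statement, at the cost of a slightly longer optimisation; your closing ``economical variant'' ($|\delta|\le z^2-1$, hence $f_z\le 1+|\delta|\le z^2$) is in substance the paper's own $\delta$-argument pushed to its sharp endpoint. Two incidental observations that your computation surfaces: the correct identity is $f_z=\cos[2(\theta+\phi)]-\delta\sin[2\theta]$ (the paper's $\theta-\phi$ is a sign slip), and the paper's displayed inequality $(z^2+z^{-2}-2)\sin[2\phi]<z^2-z^{-2}-(d-1)$ drops a factor of $2$ on $(d-1)$; carried out exactly, the paper's own method also returns $z^2>d$, consistent with your stronger bound, so the lemma as stated remains true but is not tight.
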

\begin{proof}
Again consider:
\begin{equation}
 \delta := g(z,\phi) - \sin [2\phi]
\end{equation}
so that
\begin{equation}
 f_z \equiv \cos[2(\theta-\phi)] - \delta\sin [2\theta].
\end{equation}
The maximum value of $f_1$ is 1. From Lemma \ref{thm:zalpha} we know that we must select $\sin [2\theta] > 0$. There is a solution to the inequality $f_z > d$  if and only if there is a solution to
\begin{equation}
\delta < -(d-1).
\end{equation}
This translates to 
\begin{equation}
\frac{1}{2}\left(z^2+\frac{1}{z^2}\right) \sin [2\phi] - \frac{1}{2}\left(z^2-\frac{1}{z^2}\right) < \sin [2\phi] - (d-1),
\end{equation}
equally
\begin{equation}
 (z^2+\frac{1}{z^2}-2)\sin [2\phi] < z^2 - \frac{1}{z^2} - (d-1).
\end{equation}
$z^2+\frac{1}{z^2} - 2$ is positive for all values of $z$ and $\sin [2\phi] \leq 1$. Therefore this has solutions if and only if
\begin{equation}
 z^2+\frac{1}{z^2}-2 <  z^2 - \frac{1}{z^2} - (d-1),
\end{equation}
which only has solutions for
\begin{equation}
z^2 > \frac{d+1}{2}.
\end{equation}
$z > 1$ and so the statement is proven.
\end{proof}

\begin{lemma}\label{thm:noneutrals}
There does not exist $X \in \widetilde{\mathcal{R}}$ such that
\begin{equation}
X = SR_\theta S^{-1},
\end{equation}
where $S \in \operatorname{Sp}_{2,\mathbbm R}$, $R_\theta \in \operatorname{SO}(2)$.
\end{lemma}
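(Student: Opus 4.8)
The plan is to reach a contradiction from the fact that a conjugate of a rotation has a \emph{bounded} orbit, whereas the function $f$ of Lemma~\ref{lemmaf} grows without bound along every trajectory of Eq.~(\ref{eq:esystem}) that stays bounded in norm. Suppose, for contradiction, that $X = S R_\theta S^{-1} \in \widetilde{\mathcal{R}}$ for some $S \in \operatorname{Sp}_{2,\mathbbm R}$ and some $R_\theta\in\operatorname{SO}(2)$ with $R_\theta\neq\mathbb{I}_2$ (the identity is not covered by the statement; indeed the strict monotonicity of $f$ established below shows $\mathbb{I}_2$ is never reached for $t>0$).

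First I would sharpen Lemma~\ref{lemmaf}. Writing $v=(1,-1)^{\sf T}$ and $w(t)=S(t)v$, a one-line computation with the parametrisation of Lemma~\ref{lemmaf} identifies its function with $f\big(S(t)\big) = -w_1(t)\,w_2(t)$, and differentiating along Eq.~(\ref{eq:esystem}) --- in which the control term conveniently cancels --- yields
\begin{equation*}
 \d{}{t} f\big(S(t)\big) \;=\; \tfrac12\big[(1-b)\,w_1(t)^2 + (1+b)\,w_2(t)^2\big].
\end{equation*}
Since $|b|<1$ and $w(t)=S(t)v\neq 0$ ($S(t)$ being invertible), this is \emph{strictly} positive; moreover $S(t)$ is symplectic, so its singular values form a pair $z,1/z$ with $z=\|S(t)\|$, giving $|S(t)v|\geq |v|/\|S(t)\|$ and hence $\d{}{t}f(S(t)) \geq (1-|b|)/\|S(t)\|^{2}$. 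Thus the growth rate of $f$ is bounded below as long as the trajectory stays bounded.

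Next I would use that $\widetilde{\mathcal{R}}$ is a semigroup. If $u:[0,T]\to\mathbbm R$ steers $\mathbb{I}_2$ to $X$, its $T$-periodic extension to $[0,\infty)$ produces a trajectory with $S(nT)=X^{n}$ for every integer $n\geq 1$, since on $[nT,(n{+}1)T]$ the solution equals $S_{\mathrm{orig}}(t-nT)\,X^{n}$ (the curve $S_{\mathrm{orig}}(\cdot)X^{n}$ solves Eq.~(\ref{eq:esystem}) with initial value $X^{n}$). Now $X^{n}=S R_{n\theta}S^{-1}$ with $R_{n\theta}$ orthogonal, so $\|X^{n}\|\leq\|S\|\,\|S^{-1}\|$ for all $n$, and consequently $\|S(t)\|\leq\big(\max_{s\in[0,T]}\|S_{\mathrm{orig}}(s)\|\big)\|S\|\,\|S^{-1}\|=:C$ uniformly in $t\geq 0$. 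Hence $\d{}{t}f(S(t))\geq (1-|b|)/C^{2}=:c_0>0$ for a.e.\ $t$, and integrating from $0$ (where $f(\mathbb{I}_2)=1$) gives $f(X^{n})=f(S(nT))\geq 1+nTc_0\to\infty$. But $\{X^{n}\}_n$ is a bounded subset of $\mathbbm R^{2\times 2}$ and $f$ is a polynomial, so $\sup_n f(X^{n})<\infty$ --- a contradiction, so no such $X$ exists.

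The delicate step is the norm bound in the third paragraph: Lemma~\ref{lemmaf} only supplies monotonicity, and converting it into genuinely unbounded growth requires controlling $\d{}{t}f$ from below along an \emph{infinitely long} trajectory, which is possible precisely because conjugates of rotations do not run off to infinity --- the same recurrence/non-recurrence dichotomy responsible for the whole uncontrollability phenomenon. Everything else is bookkeeping: the identity $f(S)=-(Sv)_1(Sv)_2$, the cancellation of $u$ in $\d{}{t}f$, and the standard concatenation-of-controls argument. Two remarks worth adding: setting $S=\mathbb{I}_2$ already gives the advertised conclusion that no non-trivial element of $\operatorname{SO}(2)$ --- no non-trivial passive (phase-shifting) operation --- lies in $\widetilde{\mathcal{R}}$; and since a symplectic similarity sends $SR_\theta S^{-1}$ to $(VS)R_\theta(VS)^{-1}$, Lemma~\ref{thm:symplecticsimilarity} transports the exclusion to the reachable set $\mathcal{R}$ of every unstable single-mode system~(\ref{eq:system}). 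One could instead invoke Lemma~\ref{thm:zbound} with $d=f(X)>1$ to see that any reachable $X\neq\mathbb{I}_2$ has singular value $z>1$ and hence lies outside $\operatorname{SO}(2)$, but that cheaper route misses the non-orthogonal elliptic elements $SR_\theta S^{-1}$, so the orbit-boundedness argument seems unavoidable.
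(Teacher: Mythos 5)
Your proof is correct, and it takes a genuinely different route from the paper's. The paper argues via recurrence: since $R_\theta$ is quasi-periodic, some power $X^m=SR_\theta^mS^{-1}$ comes arbitrarily close to $\mathbb{I}$, while the non-decreasing function $f_z$ (whose rate of change at the identity is $1$, Eq.~(\ref{eq:fgradientA})) forces, via Lemma \ref{thm:zbound}, a lower bound $z>1$ on every element reached after a fixed positive time --- contradicting $z(X^m)\to 1$. You instead use only the \emph{boundedness} of the orbit $\{X^n\}$: the identity $f(S)=-(Sv)_1(Sv)_2$ with $v=(1,-1)^{\sf T}$ (which checks out, as does the cancellation of the control in $\tfrac{d}{dt}f=\tfrac12[(1-b)w_1^2+(1+b)w_2^2]$, consistent with Eq.~(\ref{eq:fgradientA})) converts monotonicity of $f$ into a growth rate bounded below by $(1-|b|)/\|S(t)\|^2$, and the periodic concatenation keeps $\|S(t)\|$ uniformly bounded, so $f(X^n)\to\infty$ while $f$ stays bounded on the bounded set $\{X^n\}$. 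What each buys: your argument is more quantitative and self-contained --- it bypasses the singular-value machinery of Lemmata \ref{thm:zalpha} and \ref{thm:zbound} entirely and makes rigorous the paper's somewhat informal step ``all subsequent trajectories have a lower bound on $f_z$ greater than $1$'' by exhibiting an explicit uniform rate $c_0$; the paper's argument, on the other hand, needs only density of $\{R_\theta^m\}$ near the identity rather than a uniform norm bound along the whole concatenated trajectory, and slots directly into the $(z,\theta,\phi)$ picture used for the numerics. Your closing remarks are also sound: the restriction to $R_\theta\neq\mathbb{I}$ matches the paper's Theorem \ref{thm:mainresult}, and you are right that the cheap route via Lemma \ref{thm:zbound} excludes only the orthogonal elements themselves, not their conjugates.
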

\begin{proof}
Assume there exists $X \in \widetilde{\mathcal{R}}$ that satisfies the above condition. We may state that
\begin{equation}
X^m \in \widetilde{\mathcal{R}} \quad \forall m \in \mathbb{N}
\end{equation}
because the reachable set of Eq.~(\ref{eq:esystem}) has a semigroup structure. Note that
\begin{equation}
\begin{aligned}
 \lVert X^m-\mathbb{I} \rVert &= \lVert S(R_\theta^m - \mathbb{I})S^{-1} \rVert \\
&\leq \lVert S \rVert \lVert S^{-1} \rVert \lVert R_\theta^m-\mathbb{I} \rVert,
\end{aligned}
\end{equation}
where we use the Euclidean norm
\begin{equation}
 \lVert X \rVert := \sqrt{\Tr [X^{\sf T}X]}.
\end{equation}
$S$ is time-independent and so $||S||||S^{-1}||$ is constant. $R_\theta$ is quasi-periodic and so there must exist some $m$ such that
\begin{equation}
 \lVert R_\theta^m - \mathbb{I}\rVert < \varepsilon, \quad \forall \varepsilon > 0 
\end{equation}
and so there exists $m$ such that
\begin{equation}
\lVert X^m - \mathbb{I} \rVert < \varepsilon, \quad \forall \varepsilon > 0.
\end{equation}

From Eq.~(\ref{eq:fztbound}) we know that the value of $f_z$ must be non-decreasing along any trajectory of the system and from Eq.~(\ref{eq:fgradientA}) we see that its rate of change at identity is $1$. As a result, for some finite evolution time of Eq.~(\ref{eq:esystem}), all subsequent trajectories must contain elements that have a lower bound on their value of $f_z$ that is greater than $1$. By Lemma \ref{thm:zbound} this implies a lower bound on the value $z$ along a given trajectory of the control system given some minimal evolution time. $X^m$ is a possible trajectory of the system for all $m$ and from the above analysis we see that we can find $m$ such that the $z$ value of $X^m$ is arbitrarily close to $1$ violating the lower bound. Therefore $X$ cannot be an element of $\widetilde{\mathcal{R}}$.
\end{proof}

\begin{theorem}\label{thm:mainresult}
If Eq.~(\ref{eq:system}) is restricted to hyperbolic dynamical generators then its reachable set does not contain any elements of $\operatorname{SO}(2)$ except for $\mathbb{I}$.
\end{theorem}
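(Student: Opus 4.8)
The plan is to transport the non‑recurrence obstruction back through the normal form of Lemma~\ref{thm:symplecticsimilarity}. First I would record that the symplectic similarity underlying that lemma relates the two reachable sets in the obvious way: if $S_0\in\operatorname{Sp}_{2,\mathbbm R}$ is the fixed matrix for which $M\mapsto S_0 M S_0^{-1}$ carries the accessible generators $\Xi$ of \eq{eq:system} onto $\widetilde{\Xi}$ of \eq{eq:esystem} (any residual positive rescaling of time or of the control amplitude being irrelevant to what is reachable), then $S(t)$ solves \eq{eq:system} exactly when $S_0 S(t)S_0^{-1}$ solves \eq{eq:esystem}, the initial condition $\mathbb{I}_2$ being fixed under conjugation. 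Hence $\widetilde{\mathcal{R}}=S_0\,\mathcal{R}\,S_0^{-1}$, and since a fixed symplectic similarity cannot map a non-identity rotation out of $\operatorname{SO}(2)$, it suffices to prove the statement for $\widetilde{\mathcal{R}}$.

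Next I would argue by contradiction: suppose some $R_\theta\in\operatorname{SO}(2)$ with $R_\theta\neq\mathbb{I}$ belonged to $\mathcal{R}$. Then $S_0 R_\theta S_0^{-1}\in\widetilde{\mathcal{R}}$, but this is exactly an element of the shape $S R_\theta S^{-1}$ with $S=S_0\in\operatorname{Sp}_{2,\mathbbm R}$ and $R_\theta\in\operatorname{SO}(2)$ a non-identity rotation, which Lemma~\ref{thm:noneutrals} excludes from $\widetilde{\mathcal{R}}$. The contradiction shows no such $R_\theta$ is reachable; as $\mathbb{I}$ is reachable trivially from $S(0)=\mathbb{I}_2$, this yields $\mathcal{R}\cap\operatorname{SO}(2)=\{\mathbb{I}\}$. (As a consistency check one may also note directly, in the coordinates of Lemma~\ref{lemmaf}, that a rotation $R_\theta$ has $f=\cos[2\theta]$, so the bound $f\geq1$ on its own already excludes every rotation except $\pm\mathbb{I}$; the genuinely new content is the elimination of $-\mathbb{I}$, for which the semigroup-plus-quasiperiodicity argument of Lemma~\ref{thm:noneutrals}, fed by Lemma~\ref{thm:zbound}, is essential.)

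There is no deep obstacle here — the substance lives entirely in Lemmas~\ref{thm:zbound} and \ref{thm:noneutrals} — so the only point demanding care is making the reduction airtight: checking that Lemma~\ref{thm:symplecticsimilarity} supplies a single time‑independent conjugating symplectic $S_0$ so that $\widetilde{\mathcal{R}}=S_0\mathcal{R}S_0^{-1}$ holds on the nose, and matching the parametrisation of $\operatorname{SO}(2)$ so that ``$R_\theta\neq\mathbb{I}$'', i.e.\ $\theta\notin 2\pi\mathbbm Z$, is precisely the regime in which Lemma~\ref{thm:noneutrals} was proved — its argument drives $R_\theta^m$ arbitrarily close to $\mathbb{I}$ while using that for $R_\theta\neq\mathbb{I}$ one cannot reach $\mathbb{I}$ in zero evolution time, whereas $\mathbb{I}$ itself must be excluded from the statement. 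Once this bookkeeping is settled, the theorem is just the transport of Lemmas~\ref{thm:zbound}--\ref{thm:noneutrals} through the normal form, with no further computation.
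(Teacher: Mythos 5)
Your proof is correct and follows essentially the same route as the paper: conjugate $\mathcal{R}$ into $\widetilde{\mathcal{R}}$ via the normal form of Lemma~\ref{thm:symplecticsimilarity}, observe that a reachable non-identity rotation would appear in $\widetilde{\mathcal{R}}$ as $S_0 R_\theta S_0^{-1}$, and invoke Lemma~\ref{thm:noneutrals} for the contradiction. Your added remarks (that $f\geq 1$ alone already kills every rotation except $\pm\mathbb{I}$, and that the identity must be carved out of the statement) are accurate fine print consistent with the paper's argument.
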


\begin{proof}
The reachable set $\mathcal{R}$ of Eq.~(\ref{eq:system}) is symplectically similar to $\widetilde{\mathcal{R}}$. Lemma \ref{thm:noneutrals} states that $\widetilde{\mathcal{R}}$ does not contain any element that is symplectically similar to an element of $\operatorname{SO}(2) \setminus \mathbb{I}$. Thus $\mathcal{R}$ does not contain any element of $\operatorname{SO}(2) \setminus \mathbb{I}$.
\end{proof}

In practice our result implies that, given a set of unstable Hamiltonians, no manipulation in time of the control functions 
ever allows one to achieve any optical phase-shift operation. This holds even if the control Hamiltonians are 
able to generate the whole symplectic algebra.

\section{Example system: controlled squeezing Hamiltonians}\label{sec:example}

In the following we consider a specific example of an `unstable' system satisfying the Lie algebra rank criterion, and we 
study the corresponding reachable set.
In the single-mode scenario, such a system can be obtained by taking drift and control Hamiltonian 
as squeezing operations along different directions. Thus we consider the total Hamiltonian
\begin{equation}
\hat{H} = \hat{H}_A + u(t)\: \hat{H}_B,
\end{equation}
where the drift Hamiltonian $\hat{H}_A$ and the control Hamiltonian $\hat{H}_B$ are defined as 
\begin{equation}
\begin{aligned}\label{eq:exhams}
\hat{H}_A &= \frac{(1-c)\hat{x}^2 - (1+c)\hat{p}^2}{2}, \\
\hat{H}_B &= -\frac{\hat{x}\hat{p} + \hat{p}\hat{x}}{2},
\end{aligned}
\end{equation}
which correspond to Hamiltonian matrices
\begin{equation}\label{eq:exhammatrices}
 H_A = \begin{pmatrix} 1 - c & 0 \\ 0 & -c - 1 \end{pmatrix}, \quad
 H_B = \begin{pmatrix} 0 & -1 \\ -1 & 0 \end{pmatrix}.
\end{equation}
In terms of symplectic matrices this translates, as per Eq.~(\ref{eq:system}), into the open-loop control problem
\begin{equation}\label{eq:exsystem}
 \dot{S}(t) = (A+u(t)B)S(t),
\end{equation}
where
\begin{equation}
A \begin{pmatrix} 0 & -(1+c) \\ -(1-c) & 0 \end{pmatrix}, \quad B = \begin{pmatrix} -1 & 0 \\ 0 & 1 \end{pmatrix}.
 \label{eq:ctlrproblem}
\end{equation}
The reason for introducing this example system is to visualise its reachable set in order to give intuition for the behaviour of these systems in $n$ modes. 

Note that $A$ is parabolic, hyperbolic or elliptic if, respectively, $|c|$ is equal to, less than or greater than 1. We focus our analysis on $c=0$ to explore the hyperbolic case but in the next subsection remark on changes of behaviour as we vary $c$.

The visualisation will use the fact that the single-mode symplectic group is three dimensional and so a symplectic matrix in this group can be specified by three parameters. We choose the three parameters, $z$, $\theta$ and $\phi$ from the singular value decomposition and set them as axes on a graph. In order to plot the reachable set uniquely it is necessary to set $\theta_0$ and $\phi_0$, as discussed in Sec.~\ref{sec:svd}. We choose
\begin{equation}
 (\theta_0,\phi_0) = (0,\frac{\pi}{2}),
\end{equation}
which implies
\begin{equation}
 -\pi \leq \theta < \pi, \quad 0 \leq \phi < \pi.
 \label{eq:svdangranges}
\end{equation}
where we know that $z \geq 1$.

We will represent the reachable set as points in a cubic space with the $z$, $\theta$ and $\phi$ ranges as given. One may object that, although for $z>1$ this provides a one to one map between points on the plane and symplectic matrices, for $z=1$ this will not be the case because these matrices should be indicated by \textit{one} and not \textit{two} parameters, as per Eq.~(\ref{eq:svddecomposition}). However by Theorem \ref{thm:mainresult} we know that none of the elements of this plane will be reachable except for identity. After finding a point for identity we may maintain the cubic plot for illustrative clarity.
 
Analytically, we may find the `singular decomposition of identity' by considering the limit $t\to 0$ for any instance of reachable element. Take, for example,
\begin{equation}
  \exp\left[\begin{pmatrix} 0 & -1 \\ -1 & 0 \end{pmatrix} t\right]
\end{equation}
as $t \to 0$. Consider $t = \frac{1}{n}$, where $n \in \mathbb{N}$.
\begin{equation}
\begin{aligned}
  \exp\left[\begin{pmatrix} 0 & -1 \\ -1 & 0 \end{pmatrix} \frac{1}{n}\right] &= \left( R_{-\frac{3\pi}{4}} \begin{pmatrix} \frac{1}{e} & 0 \\ 0 & e \end{pmatrix} R_{\frac{3\pi}{4}} \right)^{\frac{1}{n}} \\
&= R_{-\frac{3\pi}{4}} \begin{pmatrix} \frac{1}{e} & 0 \\ 0 & e \end{pmatrix}^{\frac{1}{n}} R_{\frac{3\pi}{4}}.
\end{aligned}
\end{equation}
In the limit as $n \to \infty$ we find that the singular value decomposition of the identity is singled out as 
\begin{equation}\label{eq:identitydecomposition}
 \mathbb{I} = R_{-\frac{3\pi}{4}} R_{\frac{3\pi}{4}}.
\end{equation}
Using this result, we may also derive a bound on the angle $\theta$ that will appear in the numerics. Lemma \ref{thm:zalpha} states that for unstable systems $\sin[2\theta] > 0$ and so, given the range as set by Eq.~(\ref{eq:svdangranges}), this implies
\begin{equation}
-\pi < \theta < -\frac{\pi}{2}, \quad 0 < \theta < \frac{\pi}{2}.
\end{equation}
Eq.~(\ref{eq:identitydecomposition}) indicates that at $t = 0$, $\theta =-3\pi/4$. The singular value decomposition of elements must vary continuously and therefore
\begin{equation}
-\pi < \theta < -\frac{\pi}{2}.
\end{equation}

The numerics reported in the following subsection confirm and extend this analytical characterisation.

\subsection{Numerical study through optimal control}\label{numerics}

In this section we complement our analytics by applying optimal control algorithms adapted to the symplectic case in order to explore the reachable set of Eq.~(\ref{eq:exsystem}). We look to determine whether specific symplectic transformations $S_{\textit{target}}$ can be performed on our system given a fixed evolution time $T$. To test for controllability we implemented specific modules for simulating control in symplectic systems into into QuTiP, which is an open source python library for simulating quantum dynamics \cite{qutip1, qutip2}. The GRAPE algorithm \cite{grape} is used to attempt to find a control function $u(t)$ that will drive the system to perform the transformation $S_{\textit{target}}$. The evolution time $T$ is split into $Q$ equal time slices of length $\Delta t$ with the time at the beginning of each slice $t_k$. $u(t_k)$ is constant throughout the time slice, hence the piecewise constant control function $u(t)$ corresponds to a set of $Q$ real values. In this case $Q=10$.

The dynamical generators used in QuTiP are of the form
\begin{equation}
 H_k = H_A + u(t_k) H_B, \quad u(t_k) \in \mathbbm{R}.
\end{equation}
where $H_A$ and $H_B$ are as given in Eq.~(\ref{eq:exhammatrices}).

The evolution in each time slice is given by
\begin{equation}
 S_k = e^{\Omega H_k \Delta t}.
\end{equation}
The full evolution is given by
\begin{equation}
 S(T) = S_Q S_{Q-1} \cdots S_k \cdots S_2 S_1.
\end{equation}

The difference between the evolved transformation and the target is quantified by the \emph{fidelity error} (or \emph{infidelity}) as measured by the Frobenius norm
\begin{equation}
 \varepsilon := \lambda \Tr[(S(T) - S_{\textit{target}})^{\sf T} (S(T) - S_{\textit{target}})]\,,
\end{equation}
with $\lambda = 1/8$ for a $2 \times 2$ matrix.

The control function is optimised to minimise $\varepsilon$ using the L-BFGS-B method in the scipy optimization function, which is a wrapper to the implementation by Byrd \textit{et al}. \cite{lbfgsb}. The exact gradient with respect to $u(t_k)$ is calculated using the Frechet derivative (or augmented matrix method) as described in Eq.~(12) of \citep{robustqgates}. The target is considered achieved in this case if $\varepsilon < 10^{-3}$. The control function optimisation terminates unsuccessfully if either a local minima is found or a processing time limit is exceeded.

The set of possible target symplectics is discretised in the $(z,\theta,\phi)$ space by only considering points at $\pi/12$ intervals in the angular directions and $10$ logarithmically equal intervals between $z=1$ and the arbitrary upper bound of $z=100$. A bisection method was used to determine the boundary between reachable and unreachable targets. The boundary points are depicted as the darker blue points in Fig.~(\ref{fig:reachsetplots}). These simulations were repeated for combinations of $c = \{0.0, \pm0.5, \pm0.9, \pm0.99, \pm1.01, \pm1.1, \pm1.5\}$ and $T = \{0.1, 0.5, 1, 2, 3, 4, 5, 7, 10, 20, 50, 100\}$. When successful, the $S_{\textit{target}}$ test shows that there is at least one set of controls that can achieve the target transformation.

\begin{figure}
\centering
\raggedright (a) $c=0, T=1$\\
\begin{center}
\includegraphics[width=0.67\linewidth]{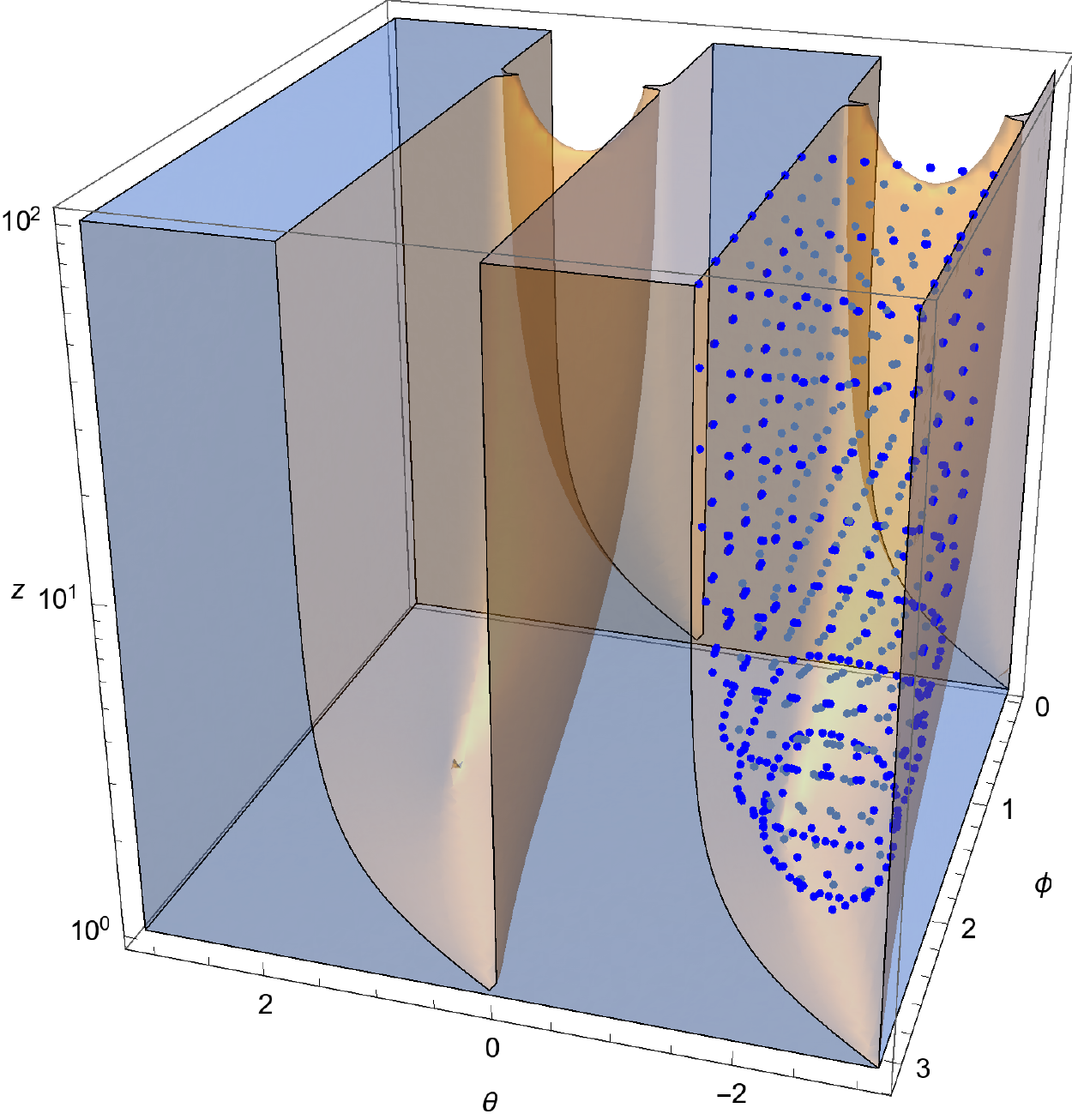}\\
\end{center}
\raggedright (b) $c=0, T=5$\\
\begin{center}
\includegraphics[width=0.67\linewidth]{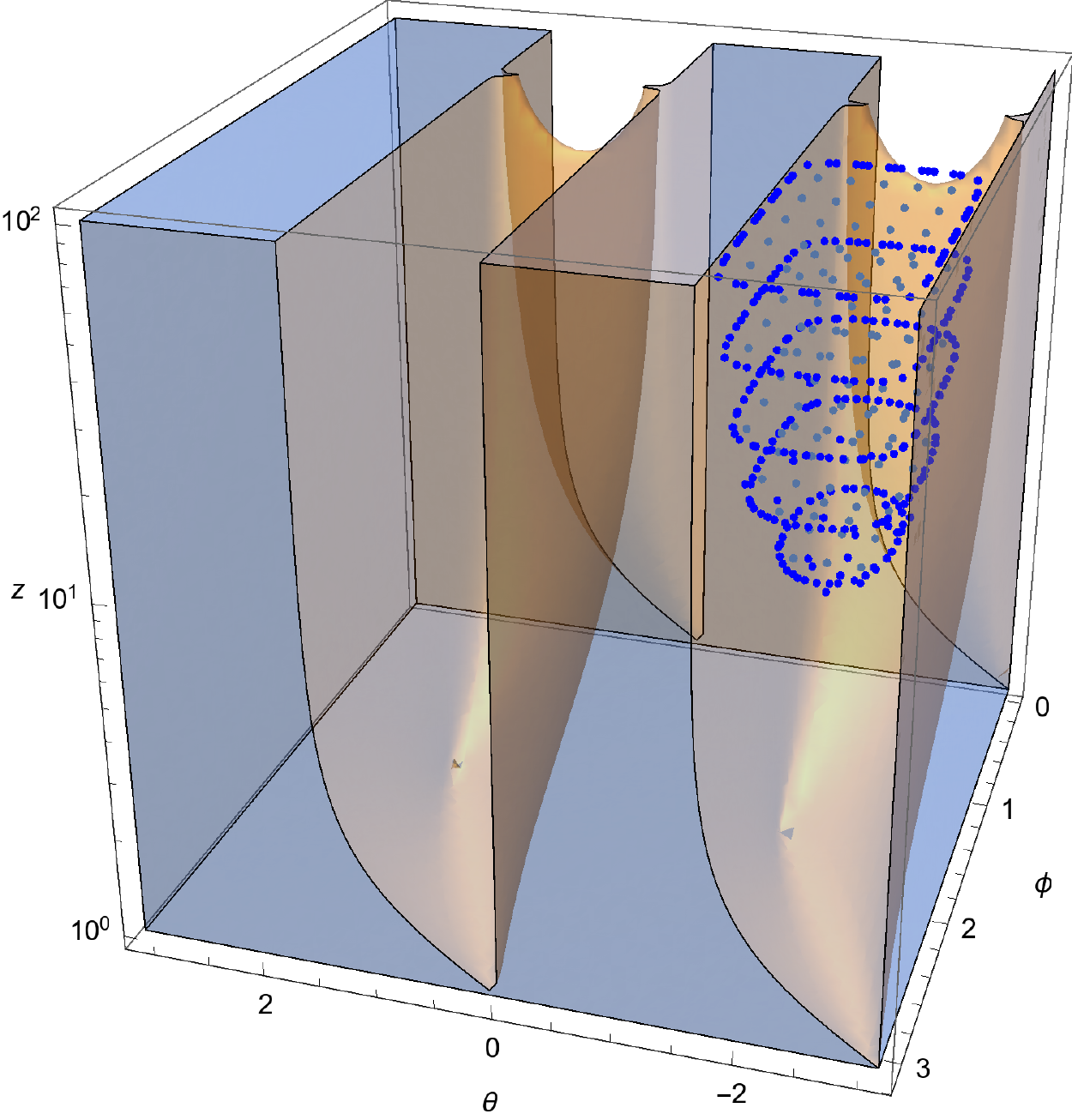}\\
\end{center}
\raggedright (c) $c=-0.99, T=5$\\
\begin{center}
\includegraphics[width=0.67\linewidth]{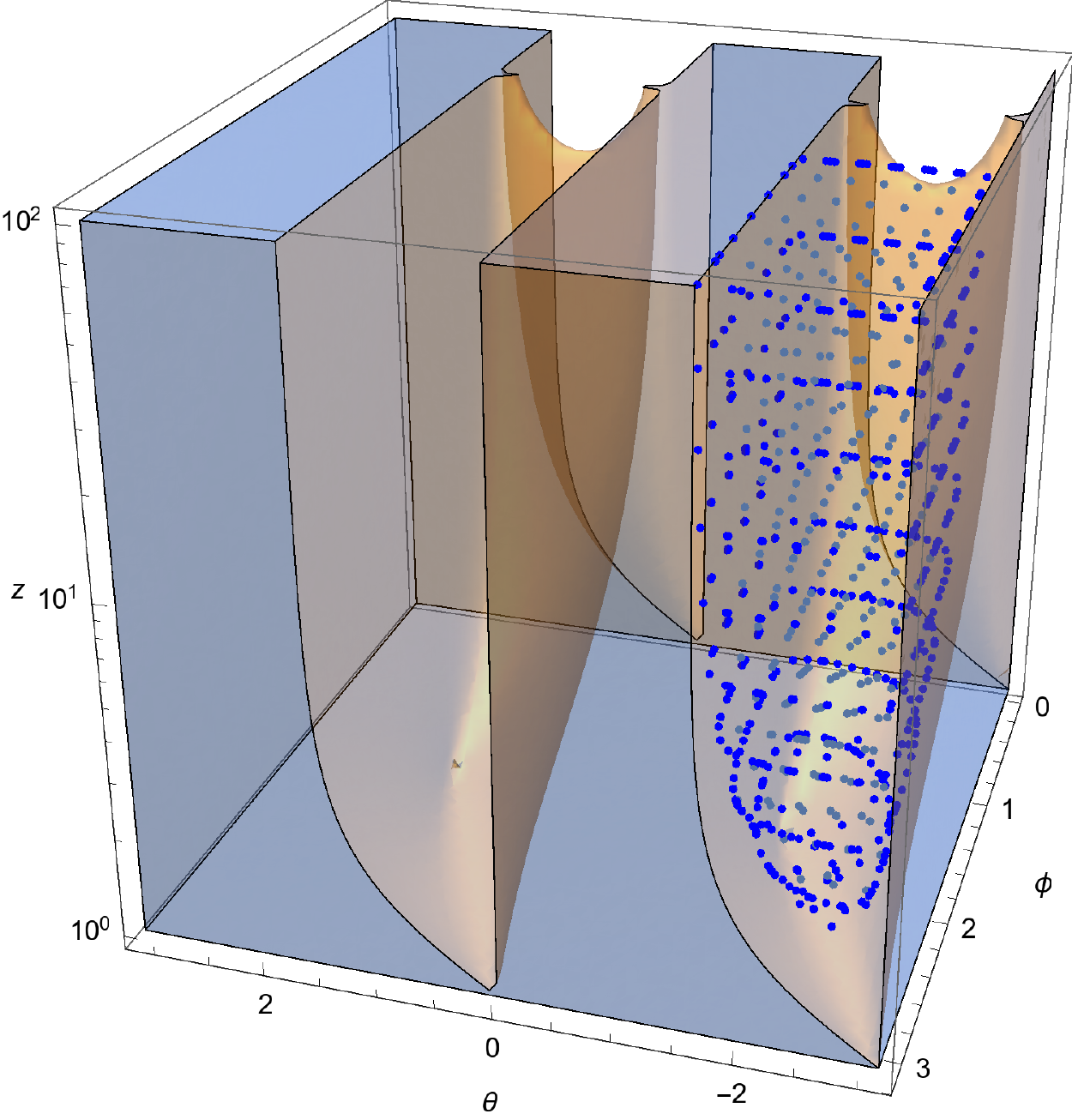}\\
\end{center}
\caption{The blue points are reachable operations in the SVD basis after time $T$. The enclosed blue region is unreachable by Eq.~(\ref{eq:fzbound}). Specific parameters: (a) $c=0, T=1$; (b) $c=0, T=5$; (c) $c=-0.99, T=5$.}
\label{fig:reachsetplots}
\end{figure}

The results of some of the tests for unstable systems are shown in Fig.~(\ref{fig:reachsetplots}). Note that the points shown are those reachable specifically \textit{at} evolution time $T$ rather than up to time $T$.
For unstable systems the reachable points are restricted to a set centred around $(\theta,\phi) = (-3\pi/4, 3\pi/4)$ and bounded by $-\pi < \theta < -\pi/2$, confirming the analytics, and $\pi/2 < \phi < \pi$, which was not proved analytically. This indicates that the numerics supply a tighter bound than the analytics.

The example system is demonstrated to be unstable for $-1<c<1$. For $|c| \geq 1.1$ all points were found to be reachable. For $|c| = 1.01$ the optimiser was unable to find a suitable control function for some $S_{\textit{target}}$. These unreached points were predominantly in the region found reachable for $-1<c<1$. However, it is most likely that this is due to the constraints placed on the pulse optimisation, and demonstrates the difficulty of finding a solution near the edge of stability. Fig.~(\ref{fig:reachsetplots}) shows the case for $c = -0.99$ where we see that the reachable set is broader. There is then a discontinuity as we pass $|c| = 1$ when the reachable set then becomes the whole space. The broadening of the reachable set as $c$ goes near the boundary indicates that the control system has become in a sense more `stable'.

The numerics show that, in one mode, when an elliptic drift field cannot be constructed, the system will restrict itself to unbounded squeezing within a small angular region. The ability to visualise this behaviour is by virtue of working in a single mode and a generalisation of this would require a more sophisticated treatment. Nevertheless, working on the numerics for this case provides some much needed intuition for a higher mode exploration.

\section{Conclusions and outlook}\label{sec:conclusion}
The Lie algebra rank criterion is necessary and sufficient for the controllability of systems evolving under compact Lie groups. The symplectic group $\operatorname{Sp}_{2n,\mathbbm R}$ is non-compact and so the criterion loses its sufficiency for these systems. The discovery of a necessary and sufficient condition for control on the symplectic group is still an open problem in mathematical control theory \cite{ElAssoudi2014}.

In this paper, we sought to characterise and visualise single-mode systems that obey the Lie algebra rank criterion but are not controllable, as 
well as to connect the mathematical treatment with the formalism and terminology of quantum optics. 
In this process we found that such systems are unable to reach non-trivial elements of the maximal compact subgroup of the symplectic group. 
Mathematically, this provides a new characterisation of an intriguing situation that has never been systematically analysed.  
In terms of physics these new results translate into the inability to enact passive operations -- phase shifters for one optical mode -- given only unstable Hamiltonians. 
Note that controlled operations generated by unstable generators 
are accessible in several experimental set-ups, both optical and mechanical 
and, given the exponential speed-up they grant, are instrumental in beating decoherence times 
-- see, for instance, reference \cite{oriol}, where such operations are proposed to achieve this aim in the context of 
superconducting quantum magnetomechanics, and the Hamiltonians generating them are referred to as 
``repulsive potentials''.

It is expected that this characterisation will extend to $n$ modes and provide physical insight into the existing mathematical and experimental problems surrounding the control of closed quantum systems.

\section{Acknowledgements}
We thank Ugo Boscain, Mario Sigalotti and Pierre Rouchon for discussions 
(during which Ugo pointed us to reference \cite{Wu2007}).
MGG and AS acknowledge financial support financial support
from EPSRC through grant EP/K026267/1. 
MGG acknowledges
support from the Marie Sk\l odowska-Curie Action H2020-MSCA-IF-2015. 
DB acknowledges support from EPSRC grant EP/M01634X/1.
We are grateful to HPC Wales for giving access to the cluster that was used to perform the numerical simulations.

\appendix

\section{Complete proof of the uncontrollability condition}\label{sec:proofs}

In order to prove Lemmata \ref{thm:symplecticsimilarity} and \ref{lemmaf}, we need a few preliminary statements, 
which are also taken directly from \cite{Wu2007}.

\begin{lemma}\label{thm:parablemma}
The equation
\begin{equation}
 \Tr[[M,N]^2] = \Tr[MN]^2-2\Tr[N^2]\Tr[M^2]
\end{equation}
holds for $M,N \in \mathfrak{sp}_{2,\mathbbm R}$.
\end{lemma}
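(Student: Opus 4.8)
The plan is to verify the identity by direct computation, exploiting the fact that $\mathfrak{sp}_{2,\mathbbm R}$ is only three-dimensional. First I would write $M$ and $N$ in the basis $\{K_x,K_y,K_z\}$ of Eq.~(\ref{eq:basis}), say $M = m_1 K_x + m_2 K_y + m_3 K_z$ and $N = n_1 K_x + n_2 K_y + n_3 K_z$. The key preliminary observation is that, for any two $2\times 2$ traceless matrices, one has the elementary identity $\Tr[XY]^2 - \Tr[X^2]\Tr[Y^2] = -(\Tr[X^2]\Tr[Y^2] - \Tr[XY]^2)$, and more usefully that the symmetric bilinear form $\langle X, Y\rangle := \Tr[XY]$ on traceless $2\times2$ matrices, together with the Cayley--Hamilton relation $X^2 = -\det(X)\,\mathbb{I}$ valid for traceless $X$, lets one reduce everything to quadratic expressions in the components.

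The cleanest route is: (i) note that $[M,N]$ is again traceless, so by Cayley--Hamilton $[M,N]^2 = -\det([M,N])\,\mathbb{I}$, hence $\Tr[[M,N]^2] = -2\det([M,N])$; (ii) similarly $\Tr[M^2] = -2\det M$, $\Tr[N^2] = -2\det N$; (iii) express $\Tr[MN]$ in terms of the components --- using $K_i K_j$ products from the algebra (\ref{eq:crs}) and the normalisations in (\ref{eq:basis}) --- and likewise compute $\det([M,N])$ directly from the structure constants. Then the claimed identity becomes a polynomial identity in $m_1,m_2,m_3,n_1,n_2,n_3$ which one checks by expansion. Alternatively, and perhaps more transparently, one can observe that both sides are invariant under the adjoint action of $\operatorname{Sp}_{2,\mathbbm R}$ (since $\Tr$ of products and commutators are conjugation-invariant), so it suffices to check the identity after bringing $M$ into one of the three canonical forms (parabolic, hyperbolic, elliptic) of Sec.~\ref{sec:setup}, e.g. a multiple of $K_x$, $K_z$, or a Jordan block, and then letting $N$ be general; this reduces the computation to a one-parameter family for $M$.

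I do not expect any genuine obstacle here: the statement is a finite-dimensional algebraic identity and every ingredient (tracelessness, Cayley--Hamilton in dimension two, the explicit structure constants) is already available in the excerpt. The only mild care needed is bookkeeping of the factors of $\tfrac12$ coming from the normalisation of $K_x,K_y,K_z$ in Eq.~(\ref{eq:basis}), and making sure the sign conventions in the commutators (\ref{eq:crs}) are propagated correctly through $\det([M,N])$. If one prefers to avoid components altogether, the identity can also be read off as the $n=1$ case of the general Lie-algebra fact that $\Tr$ of the Killing-type form applied to $[M,N]$ expands via the ``Gram determinant'' $\Tr[MN]^2 - \Tr[M^2]\Tr[N^2]$, but for a self-contained single-mode paper the brute-force basis computation is the most economical and least error-prone presentation.
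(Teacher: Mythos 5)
Your method is the same as the paper's: expand $M$ and $N$ in the basis $\{K_x,K_y,K_z\}$ of Eq.~(\ref{eq:basis}), compute the four traces, and verify a polynomial identity in $m_i,n_j$. The Cayley--Hamilton reformulation ($\Tr[X^2]=-2\det X$ for traceless $X$) and the adjoint-invariance reduction to canonical forms are harmless repackagings of that computation, so there is no methodological difference worth dwelling on.

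There is, however, a concrete problem you will hit at the very last step you defer to ``bookkeeping'': the identity \emph{as stated in the Lemma} is not a polynomial identity, and your proposed expansion will refuse to close. The correct relation, which is what the basis computation actually yields, is $\Tr[[M,N]^2] = 2\Tr[MN]^2-2\Tr[M^2]\Tr[N^2]$, i.e.\ there is a missing factor of $2$ on the $\Tr[MN]^2$ term. The quickest way to see that the stated version cannot be right is to set $N=M$ with $M$ hyperbolic, e.g.\ $M=N=K_x$: the left-hand side is $\Tr[0]=0$, while the stated right-hand side is $\Tr[M^2]^2-2\Tr[M^2]^2=-\Tr[M^2]^2=-\tfrac14\neq 0$. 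Equivalently, in components the computation reduces to the Lorentzian Lagrange identity $\bigl(m_1n_1+m_2n_2-m_3n_3\bigr)^2-\bigl(m_1^2+m_2^2-m_3^2\bigr)\bigl(n_1^2+n_2^2-n_3^2\bigr)=(m_2n_3-m_3n_2)^2+(m_3n_1-m_1n_3)^2-(m_1n_2-m_2n_1)^2$, and matching the overall factors of $\tfrac12$ forces the coefficient $2$ in front of $\Tr[MN]^2$. The paper's own proof displays Eqs.~(\ref{eq:trm2})--(\ref{eq:trmn2}) and then asserts ``we combine them,'' so the discrepancy is silent there; your plan, carried out honestly, would expose it. Note that the only downstream use, in Lemma~\ref{thm:symplecticsimilarity}, is the implication $\Tr[B^2]=0$ and $\Tr[AB]=0$ $\Rightarrow$ $\Tr[[A,B]^2]=0$, which holds for either normalisation, so the fix is purely cosmetic --- but you should prove the corrected identity, not the one written.
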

\begin{proof}
First we expand the elements in the basis defined in Eq.~(\ref{eq:basis}):
\begin{equation}\label{eq:mdecomp}
M = m_1K_x + m_2K_y + m_3K_z,
\end{equation}
\begin{equation}\label{eq:ndecomp}
N = n_1K_x + n_2K_y + n_3K_z,
\end{equation}
\begin{equation}\label{eq:mndecomp}
\begin{aligned}
\lbrack M, N \rbrack &= (m_2n_3-m_3n_2)K_x \\
& \quad+ (m_3n_1-m_1n_3)K_y \\
& \quad-(m_1n_2-m_2n_1)K_z.
\end{aligned}
\end{equation}
We use this expansion to express the value of the following terms:
\begin{equation}\label{eq:trm2}
\Tr[M^2] = \frac{1}{2}(m_1^2+m_2^2-m_3^2),
\end{equation}
\begin{equation}
\Tr[N^2] = \frac{1}{2}(n_1^2+n_2^2-n_3^2),
\end{equation}
\begin{equation}
\Tr[MN] = \frac{1}{2}(m_1n_1+m_2n_2-m_3n_3),
\end{equation}
\begin{equation}\label{eq:trmn2}
\begin{aligned}
\Tr[[M,N]^2] = &\frac{1}{2}((m_2n_3-m_3n_2)^2 \\&+ (m_3n_1-m_1n_3)^2 \\&- (m_1n_2-m_2n_1)^2).
\end{aligned}
\end{equation}
Then we combine them to prove the statement:
\begin{equation}
 \Tr[[M,N]^2] = \Tr[MN]^2-2\Tr[N^2]\Tr[M^2].
\end{equation}
\end{proof}

\begin{lemma}\label{thm:parabthing}
 If $\Tr([A,B]^2) = 0$ in Eq.~(\ref{eq:system}) then the system does not obey the Lie algebra rank criterion.
\end{lemma}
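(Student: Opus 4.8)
The plan is to show that the hypothesis forces the three elements $A$, $B$ and $[A,B]$ to be linearly dependent in the three-dimensional algebra $\mathfrak{sp}_{2,\mathbbm R}$; once that is established, the Lie algebra rank criterion fails because the subalgebra generated by $A$ and $B$ then has dimension at most two, hence cannot be all of $\mathfrak{sp}_{2,\mathbbm R}$.

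First I would record the elementary reduction that, in this single-control setting, the rank criterion is \emph{equivalent} to linear independence of $\{A,B,[A,B]\}$. Indeed, if $A$ and $B$ are proportional the generated algebra is at most one-dimensional; and if $A$ and $B$ are linearly independent but $[A,B]=\alpha A+\beta B$ for some scalars $\alpha,\beta$, then $\mathrm{span}\{A,B\}$ is already closed under the bracket, i.e.\ a two-dimensional subalgebra containing both generators, so the generated algebra is a proper subalgebra. In every such case the rank criterion fails.

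Next I would exploit the non-degenerate symmetric bilinear form $\langle M,N\rangle:=\Tr[MN]$ on $\mathfrak{sp}_{2,\mathbbm R}$, whose non-degeneracy is read off from Eq.~(\ref{eq:trm2}): in the basis of Eq.~(\ref{eq:basis}) it is $\tfrac12\,\diag(1,1,-1)$. Cyclicity of the trace gives $\langle A,[A,B]\rangle=\Tr[A^2B]-\Tr[ABA]=0$ and likewise $\langle B,[A,B]\rangle=0$, while the hypothesis is precisely $\langle [A,B],[A,B]\rangle=\Tr[[A,B]^2]=0$. Hence $[A,B]$ is orthogonal to each of $A$, $B$ and itself. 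If $\{A,B,[A,B]\}$ were linearly independent they would span all of $\mathfrak{sp}_{2,\mathbbm R}$, and then $[A,B]$ would be orthogonal to the whole algebra, contradicting non-degeneracy of the form unless $[A,B]=0$ — but $[A,B]=0$ is itself a linear dependence among $A$, $B$, $[A,B]$. Therefore $\{A,B,[A,B]\}$ is linearly dependent, and by the first step the rank criterion fails.

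I do not anticipate a genuine obstacle: the only place needing a little care is the reduction of the first paragraph (the observation that $[A,B]\in\mathrm{span}\{A,B\}$ makes $\mathrm{span}\{A,B\}$ a subalgebra) together with the bookkeeping of the degenerate sub-cases, for instance $B=0$ or $[A,B]=0$, which are all absorbed into the ``linearly dependent'' conclusion. An equivalent route is via the Gram matrix of $\{A,B,[A,B]\}$ with respect to $\langle\cdot,\cdot\rangle$, whose determinant the computation above shows to be $\big(\Tr[A^2]\Tr[B^2]-\Tr[AB]^2\big)\,\Tr[[A,B]^2]$ (using Lemma~\ref{thm:parablemma} to rewrite the last factor if desired); the stated hypothesis annihilates it, so the Gram matrix is singular and the three elements are dependent. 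I would present the orthogonality version, as it is the shortest.
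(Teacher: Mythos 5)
Your proof is correct, and it establishes the key step by a genuinely different route from the paper. The paper's proof also reduces the claim to the linear dependence of $\{A,B,[A,B]\}$, but it gets the dependence by expanding all three elements in the basis of Eq.~(\ref{eq:basis}) and computing the determinant of the resulting $3\times 3$ coefficient matrix explicitly, observing via Eq.~(\ref{eq:trmn2}) that this determinant equals $2\Tr[[A,B]^2]$; the hypothesis then annihilates it. You replace that coordinate computation with the invariant argument: cyclicity of the trace gives $\Tr[A[A,B]]=\Tr[B[A,B]]=0$, the hypothesis gives $\Tr[[A,B]^2]=0$, and non-degeneracy of the trace form on the three-dimensional algebra forces dependence. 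Your version is basis-free and transfers to other matrix Lie algebras, while the paper's explicit determinant yields the slightly stronger fact that $\Tr[[A,B]^2]=0$ is \emph{equivalent} to linear dependence of $\{A,B,[A,B]\}$, not merely sufficient for it (an equivalence also recoverable from your Gram-matrix remark). You also justify, more explicitly than the paper does, why dependence of $\{A,B,[A,B]\}$ defeats the rank criterion, namely that $\mathrm{span}\{A,B\}$ is then closed under the bracket. One minor caveat on your closing parenthesis: your identity $\det G=\bigl(\Tr[A^2]\Tr[B^2]-\Tr[AB]^2\bigr)\Tr[[A,B]^2]$ is correct, but it is consistent with $\Tr[[A,B]^2]=2\Tr[AB]^2-2\Tr[A^2]\Tr[B^2]$ rather than with Lemma~\ref{thm:parablemma} as printed, whose first term appears to be missing a factor of $2$ (test on $M=K_x$, $N=K_x+K_y$); since that remark is inessential to your main argument, this does not affect the correctness of your proof.
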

\begin{proof}
From Eqs.~(\ref{eq:mdecomp}), (\ref{eq:ndecomp}) and (\ref{eq:mndecomp}) it can be concluded that $M$, $N$ and $[M,N]$ are linearly dependent if and only if 
\begin{equation}
\det\begin{pmatrix} 
m_1 & n_1 & m_2n_3-m_3n_2 \\
m_2 & n_2 & m_3n_1-m_1n_3 \\ 
m_3 & n_3 & -(m_1n_2-m_2n_1)
\end{pmatrix} = 0,
\end{equation}
or equivalently
\begin{equation}
(m_2n_3-m_3n_2)^2 + (m_3n_1-m_1n_3)^2 - (m_1n_2-m_2n_1)^2 = 0.
\end{equation}
From Eq.~(\ref{eq:trmn2}) we see that this is equivalent to $[M,N]$ being parabolic. If $A$, $B$ and $[A,B]$ are linearly dependent then the span of $A$ and $B$ does not generate $\mathfrak{sp}_{2,\mathbbm R}$.
\end{proof}

\begin{lemma}\label{thm:hypbconversion}
Consider hyperbolic $M \in \mathfrak{sp}_{2,\mathbbm R}$. There exists $P \in \operatorname{Sp}_{2,\mathbbm R}$ such that $PMP^{-1} = \sqrt{2\Tr[M^2]}K_y$.
\end{lemma}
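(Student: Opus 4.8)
The plan is to diagonalise $M$ over $\mathbbm R$ and then realise the diagonalising change of basis by a \emph{symplectic} matrix. Write $M=m_1K_x+m_2K_y+m_3K_z$; since $M$ is traceless, Cayley--Hamilton gives $M^2=-\det[M]\,\mathbb{I}$, hence $\Tr[M^2]=-2\det[M]$. Hyperbolicity means $\Tr[M^2]>0$, so $\det[M]=-\tfrac12\Tr[M^2]<0$ and $M$ has two real, distinct, non-zero eigenvalues $\pm\mu$ with $\mu:=\sqrt{-\det[M]}=\sqrt{\tfrac12\Tr[M^2]}>0$; in particular $M$ is diagonalisable over $\mathbbm R$. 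This is precisely where hyperbolicity is used: for an elliptic $M$ the eigenvalues would be purely imaginary and no real diagonalisation would exist.

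Next I would note that the scalar prefactor in the statement is forced. Conjugation preserves $\Tr[(\cdot)^2]$ by cyclicity of the trace, and $\Tr[(\alpha K_y)^2]=\alpha^2/2$, so if $M$ is symplectically similar to $\alpha K_y$ then necessarily $\alpha=\sqrt{2\Tr[M^2]}=2\mu$; moreover $2\mu K_y=\mu\,\diag(-1,1)$ has the same eigenvalues $\pm\mu$ as $M$. It therefore remains only to exhibit some $P\in\operatorname{Sp}_{2,\mathbbm R}$ with $PMP^{-1}=\mu\,\diag(-1,1)$.

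For this I would use the symplectic form $x^{\sf T}\Omega y$ directly rather than an arbitrary diagonaliser. Choose eigenvectors $v_{\pm}$ with $Mv_{\pm}=\pm\mu v_{\pm}$; they are linearly independent, so the two-dimensional symplectic area $v_-^{\sf T}\Omega v_+$ is non-zero, and after rescaling $v_-$ we may take $v_-^{\sf T}\Omega v_+=1$. Let $P$ be the linear map with $Pv_-=e_1$ and $Pv_+=e_2$, where $e_1,e_2$ is the standard basis of $\mathbbm R^2$. Since $e_1^{\sf T}\Omega e_2=1=v_-^{\sf T}\Omega v_+$, and the diagonal pairings vanish trivially in dimension two, $P$ preserves $\Omega$, so $P\in\operatorname{Sp}_{2,\mathbbm R}$; and $PMP^{-1}$ sends $e_1\mapsto-\mu e_1$ and $e_2\mapsto+\mu e_2$, i.e. $PMP^{-1}=\mu\,\diag(-1,1)=2\mu K_y=\sqrt{2\Tr[M^2]}\,K_y$, as required.

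The one point I would take care over -- and the only place anything could go wrong -- is that $P$ must land in $\operatorname{Sp}_{2,\mathbbm R}$ and not merely in $\mathrm{GL}_2(\mathbbm R)$; the normalisation $v_-^{\sf T}\Omega v_+=1$ is exactly what secures this. Equivalently, one could take any real diagonaliser $P_0$ and then correct its determinant by composing with a suitable diagonal matrix, which commutes with $\diag(-\mu,\mu)$: this is possible because the eigenvalues are distinct, so the centraliser of $\diag(-\mu,\mu)$ is the full group of invertible diagonal matrices, whose determinants exhaust $\mathbbm R^{*}$. Either way there is no genuine obstacle; the content of the lemma is the real-diagonalisability guaranteed by hyperbolicity together with this elementary determinant bookkeeping.
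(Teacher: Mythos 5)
Your proof is correct, and it takes a genuinely different route from the paper's. The paper works entirely inside the Lie algebra in the $(K_x,K_y,K_z)$ coordinates: it first conjugates by a rotation $e^{\alpha K_z}$ to align the $(K_x,K_y)$-component of $M$ with $K_y$, and then by a squeeze $e^{\beta K_x}$ (a hyperbolic rotation, using $\sinh$ and $\cosh$) to eliminate the residual $K_z$-component, so that $P=e^{\beta K_x}e^{\alpha K_z}$ is produced explicitly as a product of two elementary symplectic transformations -- in effect a Lorentz-type normalisation of a spacelike vector for the signature $(+,+,-)$ form $\Tr[M^2]=\tfrac12(m_1^2+m_2^2-m_3^2)$. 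You instead argue spectrally: Cayley--Hamilton for traceless $M$ gives $\Tr[M^2]=-2\det[M]$, hyperbolicity forces real distinct eigenvalues $\pm\mu$, and you build $P$ from the eigenbasis normalised to unit symplectic area $v_-^{\sf T}\Omega v_+=1$, which (since $x^{\sf T}\Omega y=\det(x\,|\,y)$ in one mode) is exactly the condition $\det P=1$, i.e.\ $P\in\operatorname{Sp}_{2,\mathbbm R}=\mathrm{SL}_2(\mathbbm R)$. Your version is shorter, makes transparent that hyperbolicity is precisely the real-diagonalisability condition, and correctly pins down the scalar by trace invariance; the paper's version buys an explicit $P$ with a direct physical reading (a phase rotation followed by a squeezing operation) and coordinates that are reused in the surrounding lemmata. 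Your closing ``determinant bookkeeping'' remark does lean on the coincidence $\operatorname{Sp}_{2,\mathbbm R}=\mathrm{SL}_2(\mathbbm R)$, which is special to one mode, but your main construction via the symplectic pairing of eigenvectors does not, so nothing is at stake for the lemma as stated.
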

\begin{proof}
 Expand $M$ so we get
\begin{equation}
 M = xK_x+yK_y+zK_z,
\end{equation}
where $\Tr[M^2] = \frac{1}{2}(x^2+y^2-z^2) > 0$. First we seek a matrix $P_1 = e^{ \alpha K_z} \in \operatorname{Sp}_{2,\mathbbm R}$ which satisfies
\begin{equation}
 P_1MP_1^{-1} = \sqrt{x^2+y^2} K_y + z K_z.
\end{equation}
Let $ \alpha$ be the angle satisfying
\begin{equation}
\sin [\alpha] = \frac{x}{\sqrt{x^2+y^2}}, \quad \cos [\alpha] = \frac{y}{\sqrt{x^2+y^2}}.
\end{equation}
According to the formula 
\begin{equation}\label{eq:notbch}
e^M N e^{-M} = N + [M,N] + \frac{1}{2!}[M,[M,N]] + \ldots,
\end{equation}
one can immediately obtain that
\begin{widetext}
\begin{equation}
\begin{aligned}
e^{ \alpha K_z} M e^{- \alpha K_z} &= xe^{ \alpha K_z} K_x e^{- \alpha K_z} + ye^{ \alpha K_z} K_y e^{- \alpha K_z} + zK_z \\
&= (x\cos [\alpha]-y\sin [\alpha])K_x + (x\sin [\alpha] + y\cos [\alpha])K_y + zK_z \\
&= \sqrt{x^2+y^2} K_y + zK_z.
\end{aligned}
\end{equation}

Next we show that there is a matrix $P_2 = e^{ \beta K_x} \in \operatorname{Sp}_{2,\mathbbm R}$ which can convert $\sqrt{x^2+y^2}K_y + zK_z$ into $\sqrt{2\Tr[M^2]}K_y$. Since $x^2+y^2-z^2 > 0$ we can choose $ \beta$ such that
\begin{equation}
\sinh [\beta] = \frac{z}{\sqrt{x^2+y^2-z^2}}, \quad \cosh [\beta] = \frac{\sqrt{x^2+y^2}}{\sqrt{x^2+y^2-z^2}}.
\end{equation}
Make use of Eq.~(\ref{eq:notbch}) again and obtain
\begin{equation}
\begin{aligned}
e^{ \beta K_x}(\sqrt{x^2+y^2}K_y+zK_z)e^{- \beta K_x} &= \sqrt{x^2+y^2}e^{ \beta K_x} K_y e^{- \beta K_x} + ze^{ \beta K_x} K_z e^{- \beta K_x} \\
&= (\sqrt{x^2+y^2}\cosh [\beta]-z\sinh [\beta])K_y + (z\cosh [\beta]-\sqrt{x^2+y^2}\sinh [\beta])K_z \\
&= \sqrt{x^2+y^2-z^2}K_y \\
&= \sqrt{2\Tr[M^2]}K_y.
\end{aligned}
\end{equation}
\end{widetext}
Consequently the $\operatorname{Sp}_{2,\mathbbm R}$ matrix $e^{ \beta K_x}e^{ \alpha K_z}$ will convert $M$ into $\sqrt{2\Tr[M^2]}K_y$ when $M$ is hyperbolic.
\end{proof}

We now proceed to provide a proof for Lemma \ref{thm:symplecticsimilarity}. First we restate it.
\begin{thm-hand}[1]
 If $\Xi$ only contains hyperbolic elements then Eq.~(\ref{eq:system}) is similar, via a symplectic transformation, to 
\begin{equation}\label{eq:esystemA}
 \dot{S}(t) = (-K_x + bK_z + u(t)K_y)S(t), \quad S(0) = \mathbb{I}_2,
\end{equation}
where $b$ is some real constant with modulus strictly less than one.
\end{thm-hand}

\begin{proof}
If Eq.~(\ref{eq:system}) only has hyperbolic controls then the following inequality holds:
 \begin{equation}
  \Tr[(A+vB)^2] = \Tr[B^2]v^2+2\Tr[AB]v+\Tr[A^2] > 0,
 \end{equation}
for all $v \in \mathbbm{R}$. For this inequality to hold for all $v$ it is immediately clear that $\Tr[A^2] > 0$. We can see that $\Tr[B^2] > 0$ because (a) if it were less than zero then there exists $v$ for which the inequality does not hold and (b) if it were equal to zero then $\Tr[AB]$ must equal zero; by Lemma \ref{thm:parablemma} this implies that $\Tr([A,B]^2) = 0$ which implies that the system does not obey the Lie algebra rank criterion by Lemma \ref{thm:parabthing} which would contradict our assumption. 
 
With the knowledge that $B$ is hyperbolic, Lemma \ref{thm:hypbconversion} states that there exists a symplectic similarity transformation to transform Eq.~(\ref{eq:system}) into:
\begin{equation}\label{eq:eesystemA}
\dot{S}(t) = (A' + u(t)K_y)S(t), \quad S(0) = \mathbb{I}_2,
\end{equation}
where $A'$ is some unspecified element of $\mathfrak{sp}_{2,\mathbbm R}$. Expand $A'$ in the symplectic basis of Eq.~(\ref{eq:basis}):
\begin{equation}
A' = b_xK_x + b_yK_y + b_zK_z.
\end{equation}
By redefining $u(t)$ we can transform the system such that $b_y$ equals zero. We know that $A'$ is hyperbolic because this property is invariant under similarity transformation, therefore we know that $|b_x| > |b_z|$ from Eq.~(\ref{eq:trm2}). The role of time in Eq.~(\ref{eq:eesystemA}) allows us to rescale such that the coefficient of $K_x$ has modulus one leaving us with system
\begin{equation}
\dot{S}(t) = (\epsilon K_x + bK_z + u(t)K_y)S(t), \quad S(0) = \mathbb{I}_2,
\end{equation}
where $|b| < 1$ and $\epsilon = \pm1$. If $\epsilon = -1$ then we leave the system as it is and the proof is finished. If $\epsilon = 1$ then enacting a similarity transformation under the symplectic matrix $\Omega$ is equivalent to time reversal and sends each of the basis matrices to their negative. Thus we have shown that Eq.~(\ref{eq:system}) is symplectically similar to Eq.~(\ref{eq:esystemA}). Note that we did not talk about effects on the initial value of $X$ because this is \textit{set} to be $\mathbb{I}_2$.
\end{proof}

We now proceed to provide a proof for Lemma \ref{lemmaf}. First we restate it.
\begin{thm-hand}[2]
Any real $2 \times 2$ matrix can be written as
\begin{equation}\label{eq:xgeneralA}
X = \begin{pmatrix} x_1+x_3 & x_2+x_4 \\ x_4-x_2 & x_1-x_3 \end{pmatrix},
\end{equation}
where $x_i \in {\mathbbm R}$. If $X \in \widetilde{\mathcal{R}}$ then the function
\begin{equation}
  f(x_1,x_2,x_3,x_4) := (x_1-x_4)^2 - (x_2-x_3)^2
\end{equation}
satisfies 
\begin{equation}
 f(x_1,x_2,x_3,x_4) \geq 1
\end{equation}
and
\begin{equation}
\d{}{t}f(x_1,x_2,x_3,x_4) \geq 0,
\end{equation}
for any choice of $u(t)$ in Eq.~(\ref{eq:esystem}), or equally Eq.~(\ref{eq:esystemA}). 
\end{thm-hand}
\begin{proof}
Eqs.~(\ref{eq:esystemA}) and (\ref{eq:xgeneralA}) provide the set of equations
\begin{align}
\dot{x}_1 &= \frac{1}{2}(ax_2-x_4-vx_3), \label{eq:elementtime1}\\
\dot{x}_2 &= \frac{1}{2}(-ax_1+x_3-vx_4), \label{eq:elementtime2}\\
\dot{x}_3 &= \frac{1}{2}(-ax_4+x_2-vx_1), \label{eq:elementtime3}\\
\dot{x}_4 &= \frac{1}{2}(ax_3-x_1-vx_2). \label{eq:elementtime4}
\end{align}
Subtracting Eqs.~(\ref{eq:elementtime1}) and (\ref{eq:elementtime4}) then followed by a succeeding multiplication by $2(x_1-x_4)$ provides
\begin{equation}\label{eq:elementtimex}
\begin{aligned}
\d{}{t} (x_1-x_4)^2 &= a(x_1-x_4)(x_2-x_3)\\ 
&+ (x_1-x_4)^2 \\
&+ v(x_1-x_4)(x_2-x_3).
\end{aligned}
\end{equation}
Similarly, we have 
\begin{equation}\label{eq:elementtimey}
\begin{aligned}
\d{}{t} (x_2-x_3)^2 = & -a(x_1-x_4)(x_2-x_3) \\
& - (x_2-x_3)^2 \\
&+ v(x_1-x_4)(x_2-x_3).
\end{aligned}
\end{equation}

Then subtracting Eqs.~(\ref{eq:elementtimex}) and (\ref{eq:elementtimey})
\begin{equation}\label{eq:fgradientA}
\begin{aligned}
\d{}{t} &\left( (x_1-x_4)^2-(x_2-x_3)^2\right) \\ 
&= \; 2a(x_1-x_4)(x_2-x_3) \\
&\qquad + \left((x_1-x_4)^2 + (x_2-x_3)^2\right) \\
&= \; (1-|a|)\left((x_1-x_4)^2 + (x_2-x_3)^2\right) 
\\&\qquad + |a|\left((x_1-x_4) -sign(a)(x_2-x_3)\right)^2 \\
&\geq 0 \, .
\end{aligned}
\end{equation}
Thus, the function $f$ is nondecreasing for every trajectory of the system. Since the initial value of $f$ is 1 it can be concluded that the reachable states of Eq.~(\ref{eq:esystemA}) should satisfy the restriction that $f \geq 1$.
\end{proof}

\section{Singular value decomposition}\label{sec:appsvd}
\subsection{Uniqueness of the singular value decomposition}\label{sec:uniquesvd}
To prevent any ambiguity we require that the singular value decomposition be unique. This is not true in general and therefore we need to restrict the range of allowed angles so that it is properly defined. In short, we want 
\begin{equation}
S = R_\theta Z R_\phi = R_\alpha Z' R_\beta
\end{equation}
to imply that $ \alpha = \theta$, $ \beta = \phi$ and $Z' = Z$. The first thing to notice is that the singular values of $S$ are unique and so we would only ever get either $Z' = Z$ or $Z' = Z^{-1}$. The latter case corresponds to the situation where $z < 1$ which may be ignored provided the range of the angles is properly limited allowing $Z^{-1} = R_{-\pi/2}ZR_{\pi/2}$. Thus we need only consider two cases, $z = 1$ and $z > 1$. In the conclusion we use these cases to show that we have a freedom in how to represent the singular value decomposition.

\subsubsection{$Z \neq \mathbbm{I}$}
Let's first look at the former case, $Z'=Z$, where $Z \neq \mathbb{I}$. Assume a non-unique decomposition:
\begin{equation}
 R_\theta Z R_\phi = R_ \alpha Z R_ \beta,
\end{equation}
or equivalently
\begin{equation}
R_{\theta- \alpha} Z = Z R_{ \beta-\phi},
\end{equation}
and explicitly
\begin{equation}
\begin{aligned}
 &\begin{pmatrix} \frac{1}{z}\cos[\theta- \alpha] & -z\sin[\theta- \alpha] \\ \frac{1}{z}\sin[\theta- \alpha] & z\cos[\theta- \alpha] \end{pmatrix} = \\&\begin{pmatrix} \frac{1}{z}\cos[ \beta-\phi] & -\frac{1}{z}\sin[ \beta-\phi] \\ z\sin[ \beta-\phi] & z\cos[ \beta-\phi] \end{pmatrix}.
\end{aligned}
\end{equation}
This implies the set of conditions 
\begin{equation}
\begin{aligned}
\frac{1}{z}\sin[\theta- \alpha] &= z\sin[ \beta-\phi], \\ 
z\sin[\theta- \alpha] &= \frac{1}{z}\sin[ \beta-\phi], \\
\cos[\theta- \alpha] &= \cos[ \beta-\phi],
\end{aligned}
\end{equation}
which only hold when
\begin{equation}
\begin{aligned}
 \sin[\theta -  \alpha] &= 0, \\
\sin[ \beta-\phi] &= 0, \\
\cos[\theta- \alpha] &= \cos[ \beta-\phi].
\end{aligned}
\end{equation}
These only hold when
\begin{equation}\label{eq:avoid}
  \alpha = \theta + n\pi \quad \text{and} \quad  \beta = \phi +m\pi
\end{equation}
for $n,m \in \mathbb{Z}$ either both odd or both even.

To avoid Eq.~(\ref{eq:avoid}) being satisfied for $m,n \neq 0$ we limit $\phi$ to vary in a range less than $\pi$ so that $\beta = \phi$. This sets $m=0$ and so to satisfy Eq.~(\ref{eq:avoid}) without letting $\alpha = \theta$ the nearest option would be to let $\alpha = \theta \pm 2\pi$. The maximum range for the angles governing $\operatorname{SO}(2)$ is $2\pi$ and so this is the bound that will apply to $\theta$. For uniqueness, therefore, we set the ranges of $\theta$ and $\phi$ to:
\begin{equation}
 -\pi+\theta_0 \leq \theta < \pi+\theta_0, \quad -\frac{\pi}{2}+\phi_0 \leq \phi < \frac{\pi}{2}+\phi_0,
\end{equation}
where $\theta_0$, $\phi_0$ fix the centre of the ranges.

\subsubsection{$Z = \mathbbm{I}$}
In this case we consider $Z = \mathbb{I}$. We look for times when
\begin{equation}
 R_\theta R_\phi = R_ \alpha R_ \beta
\end{equation}
is satisfied.

These are cases when
\begin{equation}
 \theta + \phi =  \alpha +  \beta +2n\pi,
\end{equation}
for $n \in \mathbb{Z}$.

This holds true for a whole range of angles. We can arbitrarily set $\phi = \phi_0$ to let $\theta$ label the elements of $\operatorname{SO}(2)$.

\subsubsection{Angle limit}
Now we have choices on how to set the angles such that the decomposition is unique. We choose
\begin{equation}
 -\pi+\theta_0 \leq \theta < \pi+\theta_0, \quad -\frac{\pi}{2}+\phi_0 \leq \phi < \frac{\pi}{2}+\phi_0
\end{equation}
to make the singular value decomposition unique when $Z \neq \mathbb{I}$. $\theta_0$ and $\phi_0$ are some constants that we are free to set. Note that we have made a further arbitrary choice in exactly where to make the bounds tight. For $Z = \mathbb{I}$ we must totally restrict one of the angles and leave the other free; we choose so set $\phi=\phi_0$.

\subsection{`Singular value decomposition' coordinates for $f$}\label{sec:calculations}
In this section we represent $\cos[\theta]$ as $c\theta$ and $\sin[\theta]$ as $s\theta$ for brevity. We begin with two expressions for $X \in \operatorname{Sp}_{2,\mathbbm R}$:
\begin{equation}
 X = \begin{pmatrix} x_1+x_3 & x_2+x_4 \\ x_4-x_2 & x_1-x_3 \end{pmatrix},
 \end{equation}
 and
\begin{equation}
 X = \begin{pmatrix} \frac{c\theta c\phi}{z} - zs\theta s\phi & -\frac{c\theta s\phi}{z}-zs\theta c\phi \\ \frac{s\theta c\phi}{z} + zc\theta s\phi & -\frac{s\theta s\phi}{z}+zc\theta c\phi \end{pmatrix}.
\end{equation}

Equating the two expression and solving for $x_i$ we find that
\begin{align}
 2x_1 &= \frac{1}{z}(c\theta c\phi - s\theta s\phi) + z(c\theta c\phi - s\theta s\phi), \\
 2x_2 &= -\frac{1}{z}(s\theta c\phi + c\theta s\phi) - z(s\theta c\phi + c\theta s\phi), \\
 2x_3 &= \frac{1}{z}(c\theta c\phi + s\theta s\phi) - z(s\theta s\phi + c\theta c\phi), \\
 2x_4 &= \frac{1}{z}(s\theta c\phi - c\theta s\phi) + z(c\theta s\phi - s\theta c\phi),
\end{align}
and so
\begin{widetext}
\begin{align}
 2(x_1-x_4) &= \frac{1}{z}(c\theta c\phi - s\theta s\phi - s\theta c\phi + c\theta s\phi) + z(c\theta c\phi - s\theta s\phi - c\theta s\phi + s\theta c\phi), \\
 2(x_2-x_3) &= -\frac{1}{z}(s\theta c\phi + c\theta s\phi + c\theta c\phi + s\theta s\phi) - z(s\theta c\phi + c\theta s\phi - s\theta s\phi - c\theta c\phi),
 \end{align}
or more simply
\begin{equation}
\begin{aligned}
 2(x_1-x_4) &= \frac{1}{z}(c\theta-s\theta)(c\phi+s\phi) + z(c\theta+s\theta)(c\phi-s\phi), \\
 2(x_2-x_3) &= -\frac{1}{z}(c\theta + s\theta)(c\phi + s\phi) - z(c\theta-s\theta)(-c\phi+s\phi),
 \end{aligned}
\end{equation}
which leads to
\begin{equation}
\begin{aligned}
 (x_1-x_4)^2 &= \frac{1}{4}\bigg(\frac{1}{z^2}(c\theta-s\theta)^2(c\phi+s\phi)^2+z^2(c\theta+s\theta)^2(c\phi-s\phi)^2 + 2(c\theta+s\theta)(c\theta-s\theta)(c\phi+s\phi)(c\phi-s\phi)\bigg), \\
 (x_2-x_3)^2 &= \frac{1}{4}\bigg(\frac{1}{z^2}(c\theta+s\theta)^2(c\phi+s\phi)^2+z^2(c\theta-s\theta)^2(c\phi-s\phi)^2-2(c\theta+s\theta)(c\theta-s\theta)(c\phi+s\phi)(c\phi-s\phi)\bigg).
 \end{aligned}
\end{equation}
Subtracting the two
\begin{equation}
\begin{aligned}
(x_1-x_4)^2 - (x_2-x_3)^2 = \frac{1}{4}\bigg(&\frac{1}{z^2}(c\phi+s\phi)^2\big((c\theta-s\theta)^2-(c\theta+s\theta)^2\big) + \\
&z^2(c\phi-s\phi)^2\big((c\theta+s\theta)^2-(c\theta-s\theta)^2\big)+ \\
&4(c\theta^2-s\theta^2)(c\phi^2-s\phi^2)\bigg),
\end{aligned}
\end{equation}
to
\begin{equation}
\begin{aligned}
(x_1-x_4)^2 - (x_2-x_3)^2 = \frac{1}{4}\bigg(\frac{1}{z^2}(1+s2\phi)(-s2\theta) +
z^2(1-s2\phi)(s2\theta)+4c2\theta c2\phi\bigg),
\end{aligned}
\end{equation}
to
\begin{equation}  
 (x_1-x_4)^2 - (x_2-x_3)^2 = c2\theta c2\phi - s2\theta \left(\frac{1}{2}\left(z^2+\frac{1}{z^2}\right) s2\phi - \frac{1}{2}\left(z^2-\frac{1}{z^2}\right)\right).
\end{equation}
which is our new expression for $f$ in terms of $\theta$, $\phi$ and $z$.
\end{widetext}

\end{document}